\newcommand*{\rom}[1]{\expandafter\@slowromancap\romannumeral #1@}
\newtheorem{lemma}{Lemma}
\newtheorem{remark}{Remark}
\def\BibTeX{{\rm B\kern-.05em{\sc i\kern-.025em b}\kern-.08em
    T\kern-.1667em\lower.7ex\hbox{E}\kern-.125emX}}
\def\BibTeX{{\rm B\kern-.05em{\sc i\kern-.025em b}\kern-.08em
    T\kern-.1667em\lower.7ex\hbox{E}\kern-.125emX}}
\def\T{\mathrm{T}}
\def\U{\mathrm{U}}
\def\pr{\mathcal{P}}
\def\kp{k^{\prime}}
\def\h{\boldsymbol{h}}
\def\hbar{\bar{\boldsymbol{h}}}
\def\htilde{\tilde{\boldsymbol{h}}}
\def\g{\boldsymbol{g}}
\def\a{\boldsymbol{a}}
\def\v{\boldsymbol{v}}
\def\b{\boldsymbol{b}}
\def\w{\boldsymbol{w}}
\def\u{\boldsymbol{u}}
\def\F{\boldsymbol{F}}
\def\Q{\boldsymbol{Q}}
\newcommand{\abss}[1]{{\left\lvert{#1}\right\rvert}^2}
\newcommand{\abssOne}[1]{{\left\lvert{#1}\right\rvert}}
\newcommand{\norm}[1]{\left\lVert#1\right\rVert}
\begin{document}
\raggedbottom

\title{Resource Allocation in Cooperative Mid-band/THz Networks in the Presence of Mobility}

\author{Mohammad Amin Saeidi, {\em Graduate Student Member IEEE} and  Hina~Tabassum, {\em Senior Member IEEE}
\thanks{The authors are with the Department of Electrical Engineering and Computer Science at York University, Toronto, ON M3J 1P3, Canada. (e-mails: \{amin96a, hinat\}@yorku.ca).}
\thanks{This research was supported by a Discovery Grant funded by the Natural Sciences and Engineering Research Council of Canada.}
}

\maketitle

\begin{abstract}

This paper develops a comprehensive framework to investigate and optimize the downlink performance of cooperative \textit{multi-band networks (MBNs)} operating on upper mid-band (UMB) and terahertz (THz) frequencies, where base stations (BSs) in each band cooperatively serve users.  The framework
captures sophisticated features such as near-field channel modeling, fully and partially connected antenna architectures, and users’ mobility. 
First, we consider joint user association and hybrid beamforming optimization to maximize the system sum-rate, subject to power constraints, maximum cluster size of cooperating BSs, and users' quality-of-service (QoS) constraints.
By leveraging fractional programming FP and majorization-minimization techniques, an iterative algorithm is proposed to solve the non-convex optimization problem.
We then consider handover (HO)-aware resource allocation for \textit{moving} users in a cooperative UMB/THz MBN.
Two HO-aware resource allocation methods are proposed. The first method focuses on maximizing the HO-aware system sum-rate subject to HO-aware QoS constraints. Using Jensen's inequality and properties of logarithmic functions, the non-convex optimization problem is tightly approximated with a convex one and solved. The second method addresses a multi-objective optimization problem to maximize the system sum-rate, while minimizing the total number of HOs. 
Numerical results demonstrate the efficacy of the proposed algorithms, cooperative UMB/THz MBN over stand-alone THz networks, as well as the critical importance of accurate near-field modeling in extremely large antenna arrays. Moreover, the proposed HO-aware resource allocation methods effectively mitigate the impact of HOs, enhancing performance in the considered system.  

\end{abstract}

\begin{IEEEkeywords}
Multi-band networks, Terahertz communication, Upper mid-band, Coordinated multi-point transmission, HO-aware resource allocation, Near field communication.
\end{IEEEkeywords}

\section{Introduction}
The upper mid-band (UMB), ranging from 7 to 24 GHz, has recently gained attention as it offers more spectrum than the conventional sub-6 GHz band, while providing better propagation and coverage characteristics compared to millimeter-wave (mmWave) frequencies \cite{Mid-Band-1,Mid-Band-2}. \textcolor{black}{In particular, the UMB operating in the 7.75–8.4 GHz range is intended for both outdoor and indoor applications in urban and suburban areas, aiming to meet high data rate and moderate mobility requirements \cite{Mid-Band-2}.} 
\textcolor{black}{Moreover, frequency bands above 100 GHz, such as the terahertz (THz) band, offer substantial bandwidth to support services that require high data rates \cite{saeidi2024molecularabsorptionaware}. The technical feasibility of frequency bands above 100 GHz for international mobile telecommunications technologies is investigated in the ITU report \cite{itu_report_imt_above_100ghz}.} 
Furthermore, the integration of different frequency bands in multi-band networks (MBNs) can offer improved network sum-rate, enhanced support for mobility, and cost-effective network deployment \cite{MBN-Survey,MBN-Magazine}. MBNs are anticipated to offer superior performance compared to single-frequency networks by opportunistically leveraging the propagation characteristics and available bandwidth of each spectrum. 
\textcolor{black}{While the joint utilization of the UMB and THz frequencies within an MBN has not been studied previously, there are several research works that consider similar MBNs supporting high frequency THz/mmWave and lower frequency bands, such as sub-6 GHz \cite{MBN-Survey,MBN-Magazine,yuan2023joint,MBN-mmWave-6GHz,MBN-ZJ}. Various prototypes and ongoing standardization efforts are also available in \cite{MBN-Survey}.} 
\textcolor{black}{In particular, incorporating lower frequency bands into an MBN alongside THz can be advantageous.} While THz spectrum offers wider transmission bandwidths, the propagation losses caused by the molecular absorption and blockages in line-of-sight (LoS) channels limit the propagation distance \cite{saeidi2024molecularabsorptionaware}. 
Moreover, due to the limited coverage and blockage of the THz band, mobile users frequently switch BSs, resulting in a high number of handovers (HOs) that can degrade system performance. 
\textcolor{black}{MBNs can address the low coverage, higher HOs, and link blockage issues of the THz band through transmission over lower frequencies with better coverage, such as UMB, while simultaneously utilizing the wider THz transmission bandwidth. Leveraging the distinct features of the THz and UMB spectrum, MBNs can enhance overall system performance by improving coverage and reducing  HOs.}

 

Along another note, extremely large antenna arrays (ELAA) are emerging as one of the potential solutions to combat propagation effects at the THz spectrum. An ELAA consists of an array with a large number of antennas and typically utilizes hybrid beamforming instead of traditional digital beamforming, as the latter would require high number of radio frequency (RF) chains.  Additionally, the large aperture of these arrays extends the near-field region due to increased  Fraunhofer distance, necessitating the adoption of near-field communication (NFC) channel modeling \cite{NFC-Tutorial,NFC-Tutorial-2}. In addition to MBN, coordinated multi-point transmissions (CoMP) are another viable solution to ensure reliable connectivity in THz networks where users are supported through multiple BSs on the same resource block. 
\textcolor{black}{In particular, the impact of blockage can be partially mitigated, as users are able to associate with multiple BSs within their cluster. Additionally, cooperative beamforming improves the received power at the users and enables more effective interference cancellation.}
Nonetheless, while CoMP offers improved performance due to spatial diversity in stationary scenarios, its performance gains are unclear for mobile scenarios due to the inherent trade-off between the number of  HOs and cluster size of CoMP\cite{ammar2024handoffs}.

To this end, we develop a novel optimization framework to address the question of ``\textit{which existing wireless technologies can support high frequency THz networks in the presence of users' mobility and to what extent?}" In this regard, we consider investigating three key technologies, i.e.,  UMB/THz MBN, as well as cooperative transmission where BSs in each spectrum cooperate to serve users. The framework captures NFC channels, users' quality-of-service (QoS) requirements, and velocity of users. Thus, our framework addresses questions such as \textit{how cooperative MBN compares to cooperative THz with or without mobility? and how cooperative transmissions enable mobility-aware THz transmissions?}

\subsection{Related Works}

\textcolor{black}{
The utilization of ELAA systems at higher frequencies is essential for addressing the challenge of severe path loss. However, due to the larger array aperture of these antenna systems, it is advantageous to adopt NFC channel modeling based on the spherical wave model \cite{ELAA-NFC-1}. The impact of NFC on the sum-rate maximization problem in multi-user systems is examined in \cite{ELAA-NFC-2}. Compared to conventional far-field models, incorporating NFC introduces an additional degree of freedom by enabling beam focusing in both the angular and distance domains. ELAA and NFC have also demonstrated improved performance in various applications, including modular arrays, simultaneous wireless information and power transfer, and physical layer security \cite{ELAA-NFC-1, ELAA-NFC-2, ELAA-NFC-3, ELAA-NFC-4}. 
}

To date, a few research works considered resource allocation in either single-antenna THz-only networks or RF or sub-6GHz/THz MBNs, where the sub-6~GHz spectrum coexists with the THz spectrum. In \cite{saeidi2024molecularabsorptionaware}, joint power allocation and user association solution was proposed in a dense THz system with multi-connectivity. The solutions are optimized using alternating optimization, combining fractional programming (FP), the alternating direction method of multipliers, and the concept of unimodularity. 
 The works in \cite{MBN-Magazine,MBN-Survey} investigated the impact of RF/THz MBNs on various key performance indicators, including network data rate, deployment cost efficiency, HO, and coverage probabilities. The results indicated promising performance improvements compared to scenarios deploying either only THz or RF BSs.  
In \cite{yuan2023joint}, the authors examined the use of RF/THz MBNs in space-air-ground networks and proposed annealing algorithm to optimize THz and RF channel allocations. In \cite{hassan2020user}, the authors proposed user association strategy to minimize the network load standard deviation while considering users' QoS in RF/THz MBNs.

The aforementioned works focus on \textit{single antenna BSs} without \textit{cooperation among BSs} or \textit{mobility of devices}.

\textcolor{black}{In \cite{fang2021hybrid}, sum power minimization in cooperative mmWave networks with hybrid beamforming is investigated, where analog beamformers are derived using equal-gain transmission, and digital beamformers are optimized through relaxed semidefinite programming.}
Along another note, a variety of research works considered mobility-aware load balancing \cite{MBN-Mobility-2, gupta2024forecaster, zeng2020realistic} or mobility-aware resource allocation \cite{farokhi2018mobility, HoCost-1, prado2023enabling, ren2023handoff} with single-antenna BSs, thereby no  beamforming and no consideration of cooperation among BSs. 
The study in \cite{MBN-Mobility-2} analyzed load balancing in a hybrid Light Fidelity (LiFi)/RF network by examining the impact of mobility on HO efficiency between consecutive states. In \cite{zeng2020realistic}, a hybrid WiFi and LiFi network was explored where dynamic load balancing with HO was considered, and an enhanced evolutionary game theory-based load balancing method was proposed. 
In \cite{farokhi2018mobility}, power, sub-carrier, and cell association were optimized by maximizing the system sum-rate, where the HO success rate is taken into account as a constraint.  The authors of \cite{HoCost-1} proposed HO-aware user association and power allocation to maximize the proportional fairness utility function in a visible light communication  network, assuming knowledge of users' future rates. Furthermore, the authors of \cite{prado2023enabling} investigated proportional fairness maximization by optimizing the ratio of resources a user receives and user association, while incorporating HO cost into the rate formulation. 
Additionally, a delay minimization problem that considers HO interruptions in distributed computing within a high-altitude platform station–assisted vehicular network is addressed in \cite{ren2023handoff}, where task-splitting ratios, transmit power, bandwidth, and computing resource allocation were optimized. \textcolor{black}{The work in \cite{bao2017optimizing} investigates user-centric cooperative networks within a stochastic geometry framework, accounting for both HO cost and data rate. In \cite{ammar2024handoffs}, HO management in user-centric cell-free massive MIMO systems is studied, where the number of HOs is minimized using a non-causal objective function.} 

\subsection{Motivation and Contributions}
To date, none of the aforementioned research works investigated the performance benefits of \textit{coordinated transmissions in MBNs or THz-only networks with or without users' mobility}. Moreover, most of the existing research works rely on single-antenna transmissions, thus the resource allocation solutions do not consider beamforming, and thus not optimized for multi-antenna or ELAA transmissions. This paper develops a comprehensive optimization framework to identify key wireless technologies that can support high frequency THz networks in the presence of users' mobility.  The framework captures sophisticated features such as cooperation in MBNs, users' QoS, hybrid beamforming, and users' mobility. The main contributions can be summarized as follows:

$\bullet$ We propose a cooperative UMB/THz MBN, where BSs are equipped with ELAAs, and formulate the achievable rate based on accurate NFC channel modeling. This model considers spherical wavefronts to capture not only the impact of the distance between users and each antenna element, but also the relative velocity of users with respect to each antenna element. We then formulate  resource allocation problems in two scenarios (a) to maximize the system sum-rate considering \textit{stationary} users and (b)  maximize the HO-aware sum-rate for \textit{moving} users. We optimize analog and digital beamforming, and user association variables, subject to power constraints, maximum cluster size, users' QoS, blockages, etc.
     
$\bullet$ We develop a unified solution to the sum-rate maximization problem for stationary users in cooperative THz/UMB MBN. First, we derive a  closed-form feasible solution for the analog beamformers and reformulate the power budget constraint accordingly. We then transform the  coupled user association and digital beamforming variables  using big-M approach. Using FP  and quadratic transformation, we then handle the non-convexity of the ratio terms in the objective function and constraints. {We prove that applying a quadratic transformation to the FP problem with a logarithmic fractional objective and QoS constraints yields a problem that is equivalent to the original FP.} Additionally, we use the Majorization-Minimization (MM) method to address the binary user association variables. Finally, we transform the non-convex problem into a convex one and propose an iterative algorithm to jointly solve for digital beamformers and user association variables.

    
$\bullet$ {Building on the proposed framework for stationary users, we derive two solutions for the HO-aware resource allocation problem.} In the first solution, we maximize the HO-aware rate subject to HO-aware QoS and soft HO constraints.  {We prove that the HO-aware rate is a \textit{log-concave} function after applying the quadratic transformation. Using this property, we demonstrate that the non-convex optimization problem  can be tightly approximated by a convex problem leveraging Jensen's inequality and logarithmic properties.}
In the second approach, we formulate a multi-objective optimization problem (MOOP) that aims to maximize the system sum-rate, while minimizing the total number of HOs. This problem is transformed into a weighted-objective optimization and solved using the algorithm proposed for the stationary scenario. 
    
$\bullet$ \textcolor{black}{Numerical results validate the effectiveness of the proposed algorithm and quantify the performance gains of  cooperative MBN compared to a stand-alone THz network, both with and without mobility. While requiring more complexity for channel estimation, the results highlight the importance of NFC-based beamforming with ELAAs, as opposed to its FFC-based counterpart. Additionally, it is observed that increasing the user-centric cluster size is not always beneficial in mobility scenarios within cooperative systems.}

\textbf{Notations:} Boldface lowercase letters represent vectors and boldface capital letters to represent matrices. $\mathbb{B}^{B\times K}$ and $\mathbb{R}^{B\times K}$ denote the $B\times K$ space of binary-valued and real-valued numbers, respectively. $\mathbb{C}^{B\times K}$ represent the $B\times K$ space of complex numbers. $\abssOne{\boldsymbol{x}}$ is L2 norm of complex vector $\boldsymbol{x}$ and is the absolute value for scalar values. $\norm{\boldsymbol{X}}_F$ denotes the frobenius norm of matrix $\boldsymbol{X}$. $\boldsymbol{X}^H$ is the conjugate transpose of complex matrix $\boldsymbol{X}$, and $\boldsymbol{X}^{\mathbb{T}}$ is only the transpose of the $\boldsymbol{X}$. $\Re\{\boldsymbol{X}\}$ denotes the real part of $\boldsymbol{X}$, and $x^*$ is the conjugate of complex number $x$. $\boldsymbol{X}=[.,.,.]$ denotes different column of matrix $\boldsymbol{X}$. $\nabla \boldsymbol{f}$ is the gradient of function $\boldsymbol{f}$. $\odot$ and $\otimes$ denote the Hadamard and Kronecker products, respectively. \textcolor{black}{Table~I summarizes the main symbols and notions used throughout the paper.}

\begin{table}[t]
{\color{black}
\caption{List of Main Notations}\label{tab:notation}
\centering
\renewcommand{\arraystretch}{1.0}
\begin{tabular}{ll}
\toprule
Symbol & Definition \\ \midrule
$B^{\T}$, $B^{\U}$ & Number of THz / UMB BSs \\[1pt]
$K$ & Total number of users \\[1pt]
$\mathcal{C}_{k}^{\T}$, $\mathcal{C}_{k}^{\U}$ & User-centric cluster size in THz / UMB  \\[1pt]
$M^{\T}$, $M^{\U}$ & Number of antennas per TBS / UBS \\[1pt]
$f_{\T}$, $f_{\U}$ & Carrier frequency of THz / UMB band \\[1pt]
$\omega^{\T}$, $\omega^{\U}$ & Bandwidth of THz / UMB band \\[1pt]
$\alpha_{b,k}^{n}$ & Association of user $k$ with TBS $b$ at point $n$ \\[1pt]
$\beta_{r,k}^{n}$ & Association of user $k$ with UBS $r$ at point $n$  \\[1pt]
$\xi$ &  Blockage density \\[1pt]
$\mathbf{h}_{b,k}$, $\mathbf{g}_{r,k}$ & THz / UMB channel vectors \\[1pt]
$\mathbf{F}_{b}$, $\mathbf{Q}_{r}$ & Analog beamformers at TBS $b$ / UBS $r$ \\[1pt]
$\mathbf{w}_{b,k}$, $\mathbf{u}_{r,k}$ & Digital beamformers for user $k$ (TBS $b$, UBS $r$) \\[1pt]
$P_{\max}^{\T}$, $P_{\max}^{\U}$ & TBS / UBS transmit-power budgets \\[1pt]
$\gamma_{k}^{\T}$, $\gamma_{k}^{\U}$ & Downlink SINR of user $k$ on THz / UMB \\[1pt]
$R_{k}$ & Aggregate rate of user $k$ (THz+UMB) \\[1pt]
$\bar{R}_{k}$ & HO-aware rate of user $k$ \\[1pt]
$\eta^{\T}$, $\eta^{\U}$ & HO cost factors at THz / UMB \\[1pt]
$\mu_{k}$, $\zeta_{k}$ & FP quadratic-transform auxiliary variables \\ \bottomrule
\end{tabular}
}
\end{table}

\section{System Model and Assumptions}

We consider an MBN, where $B^\T$ THz BSs (TBSs) and $B^\U$ UMB BSs (UBSs) are deployed along a corridor of length $D$ and all BSs are connected to a central processor unit (CPU) through high capacity backhaul links. $\mathcal{B_T}=\{1,\dots,B^\T\}$ and $\mathcal{B_U}=\{1,\dots,B^\U\}$ denote the set of all TBSs and UBSs, respectively. Also, $K$ single-antenna mobile users, are deployed on the corridor such that the velocity of user $k$ is denoted by $v_k$. \textcolor{black}{We assume that each user is equipped with two separate terminals to receive signals at both the UMB and THz spectrum \cite{aboagye2023energy}.}\footnote{\textcolor{black}{In fixed wireless access and vehicular communication applications, as considered in this work, the practical challenges of implementing separate hardware, such as size constraints, are less stringent. Moreover, advanced unified MBN transceivers based on up/down conversion techniques can be utilized to enable the reception of signals across both frequency bands \cite{MBN-Survey,MBN-TRx-1,MBN-TRx-2}. For further discussion of the practical challenges associated with MBNs and potential solution approaches, we refer the reader to \cite{MBN-Magazine,MBN-Survey}.}}

We consider that a cluster of BSs in each frequency band can serve a user in a cooperative manner. Each TBS is equipped with a uniform linear array (ULA) of $M^\T$ antennas connected to a hybrid beamformer (HBF). 
Each HBF at the TBSs comprises of an analog beamformer and a digital beamformer (precoder). The analog beamformer, which is a network of phase shifters, is connected to the digital beamformer via $N^\T$ RF chains ($M^\T \geq N^\T$) either with fully or partially-connected architecture.  Each UBS has $M^\U$ antennas connected to HBF with $N^\U$ RF chains, such that $M^\U \geq N^\U$ \cite{fang2021hybrid}. 

\subsection{Near Field Communication Model}
\begin{figure}
   \centering
\includegraphics[scale=0.71]{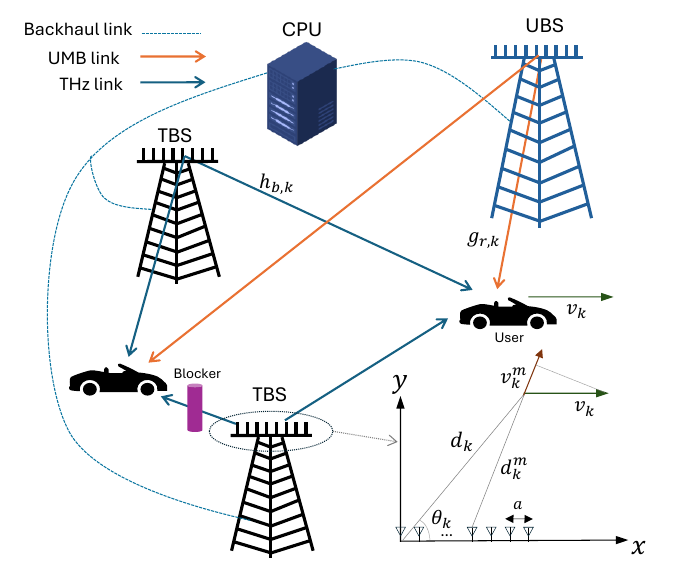}
\caption{{System model illustrating BSs and users and NFC model.}}
\label{fig:SysModel}
\end{figure}

In this subsection, we introduce the NFC modeling regardless of the frequency band, and the notations used here are general. Without loss of generality, according to Fig.~\ref{fig:SysModel}, each ULA is deployed along the $x$-axis, which is in parallel with the users' movement. Denoting the distance between the first antenna element of BS $i\in{\mathcal{B_T} \cup \mathcal{B_U} }$, and user $k$ by $d_{i,k}= d_{i,k}^{(0)}$ and the angle of the user with respect to the first antenna element of a BS by $\theta_{i,k}$, based on the near-field spherical wavefront modeling, the exact distance between user $k$ and $m$-th antenna of BS $i$ is obtained as \cite{NFC-Tutorial}:
\begin{equation}\label{eqn:NFC-Dist}
    d^{(m)}_{i,k}(d_{i,k},\theta_{i,k}) = \sqrt{d_{i,k}^2 + m^2 x^2 - 2d_{i,k} m x \cos(\theta_{i,k})},
\end{equation}
where $0 \leq m \leq M-1$, and $x$ is the antenna spacing, which can be set as half the wavelength of the operating carrier frequency ($\lambda$).
Based on NFC model in \eqref{eqn:NFC-Dist}, the array response of BS $i$ with respect to user $k$ at angle of departure (AoD) of $\theta_{i,k}$ is given by:
\begin{equation}\label{eqn:ArrayResp}
    \a_{i,k}(d_{i,k},\theta_{i,k}) = \left[e^{-j\frac{2\pi}{\lambda}d^{(0)}_{i,k}},\dots,e^{-j\frac{2\pi}{\lambda}d^{(M-1)}_{i,k}} \right]^{\mathbb{T}}.
\end{equation}
Utilizing the distance between user $k$ and the $m$-th antenna of BS $i$, we need to obtain the NFC-aware velocity of user $k$, which can vary with respect to each antenna element. To do so, the projection of $v_k$ onto the line connecting the $m$-th antenna to the user, i.e. $d^{(m)}_{i,k}$ is derived using:
\begin{equation}\label{eqn:Relative-Veloc-NFC}
    v_{i,k}^{(m)}(d_{i,k},\theta_{i,k}) = \frac{d_{i,k} \cos(\theta_{i,k}) - m x}{d^{(m)}_{i,k}} v_k.
\end{equation}
Then, the Doppler-shift of user $k$ at BS $i$ is obtained by:
\begin{equation}\label{eqn:DopplerResp}
    \v_{i,k}(d_{i,k},\theta_{i,k},v_{k}) = \left[e^{-j\frac{2\pi}{\lambda}v_{i,k}^{(0)} \tau},\dots,e^{-j\frac{2\pi}{\lambda}v_{i,k}^{(M-1)} \tau} \right]^{\mathbb{T}},
\end{equation}
where $\tau$ is the duration between two trajectory points.
Based on the array response in \eqref{eqn:ArrayResp} and Doppler-shift in \eqref{eqn:DopplerResp}, we define the velocity-aware effective array response as \cite{THz-Doppler}:
\begin{equation}\label{eqn:Eff-ArrayResp}
    \b_{i,k}(d_{i,k},\theta_{i,k},v_k) = \a_{i,k}(d_{i,k},\theta_{i,k}) \odot \v_{i,k}(d_{i,k},\theta_{i,k},v_{k}).
\end{equation}
From this point onward, the indices $b\in \mathcal{B_T}$ and $r\in \mathcal{B_R}$ are used for the $b$-th TBS and $r$-th UBS, respectively. 

\textcolor{black}{The analysis of parameter estimation for CSI acquisition, such as user location and velocity, and the associated overhead, is beyond the scope of this work. However, data-driven approaches and recent advancements in integrated sensing and communication, as well as cooperative strategies involving low Earth orbit satellites, provide promising directions for improving the estimation of these parameters and enhancing the overall quality of CSI acquisition \cite{Ch-Pred-1,Ch-Pred-2-EKF,Ch-Pred-3-EKF-UserTracking,Ch-Pred-4}. In addition, the overhead associated with CSI transmission can be reduced through dynamic CSI compression techniques, such as the method proposed in \cite{barahimi2024rscnet}.}

\subsection{Channel Model}
\subsubsection{THz Channel}
We consider the LoS transmission links in the THz band. According to \cite{saeidi2024molecularabsorptionaware,MBN-Survey}, and based on Beer-Lambert law, the path-loss between TBS $b$ and user $k$ is given by:
\begin{equation}\label{eqn:THz-PathLoss}
    PL_{b,k} = \left(\frac{c{(G_{\mathrm{Tx}}^{\mathrm{T}} G_{\mathrm{Rx}}^{\mathrm{T}})}^{\frac{1}{2}}}{4\pi f_{\mathrm{T}}d_{b,k}}\right) e^{-\frac{1}{2}\mathcal{K}(f_{\mathrm{T}})d_{b,k}},
\end{equation}
where $c$ is the speed of light, $f_{\mathrm{T}}$ is the carrier frequency of the THz band, and $\mathcal{K}(f_{\mathrm{T}})$ is the molecular absorption coefficient, that can be obtained using high-resolution transmission (HIRTAN) molecular absorption database \cite{THz-Channel-2-HITRAN}. $G_{\mathrm{Tx}}^{\mathrm{T}}$ and $G_{\mathrm{Rx}}^{\mathrm{T}}$ are the gain of each directional antenna element in the arrays of the transmitter and receiver at the THz band, respectively \cite{yan2022energy}. The re-radiation of the absorbed signals by molecules  affects signal propagation, and the signals experience the following path-loss due to molecular absorption noise \cite{saeidi2024molecularabsorptionaware,hossan2021mobility}:
\begin{equation}\label{eqn:THz-PathLoss-MolecNoise}
    \tilde{PL}_{b,k} = \left(\frac{c{(G_{\mathrm{Tx}}^{\mathrm{T}} G_{\mathrm{Rx}}^{\mathrm{T}})}^{\frac{1}{2}}}{4\pi f_{\mathrm{T}}d_{b,k}}\right) {\left(1-e^{-\mathcal{K}(f_{\mathrm{T}})d_{b,k}}\right)}^{\frac{1}{2}}.
\end{equation}

We can then formulate the channels at the THz band by combining the path losses and the effective array response as:
\begin{align}
    & \h_{b,k} = PL_{b,k} \boldsymbol{b}_{b,k}(d_{b,k},\theta_{b,k},v_{k}) \\
    & \htilde_{b,k} = \tilde{PL}_{b,k} \boldsymbol{b}_{b,k}(d_{b,k},\theta_{b,k},v_{k}),
\end{align}
where $\h_{b,k} \in \mathbb{C}^{M^\T\times 1}$ and $\htilde_{b,k}$ are the direct and molecular absorption THz channels between TBS $b$ and user $k$, respectively.

Moreover, THz transmission is susceptible to blockage due to high penetration losses \cite{MBN-Magazine,saeidi2023tractable}; thus, we incorporate link blockage probability. We define $\psi_{b,k} \in {\mathbb{B}}$ as the indicator variable of the THz band for blockage such that $\psi_{b,k} = 0$ if there is a blockage between user $k$ and TBS $b$, and otherwise $\psi_{b,k} = 1$. The blockage variable $\psi$ follows the Bernoulli distribution \cite{Blockage-aware1,AntennaGain}:
\begin{equation}\label{eqn:Blockage-Dist}
\psi_{b,k}= \left\{ \begin{array}{ll}
0, &  \text{With probability} \ \ 1-e^{-\xi d_{b,k}} \\ 1, & \text{With probability} \ \ e^{-\xi d_{b,k}},
\end{array}\right.
\end{equation}
where $\xi$ is the density of blockers in a given area. Considering that if the link between a user and a TBS is blocked, the corresponding channels must have zero vectors. Therefore, the effective THz channels with blockage
are obtained as follows:
\begin{equation} \label{eqn:THz-Ch-Blockage-1}
\h_{b,k} = \psi_{b,k}\mathbf{1}_{M^\T\times 1} \odot\h_{b,k},
\end{equation}
where $\mathbf{1}_{M^\T\times 1}$ is a vector of ones. The same multiplication is also applied to $\tilde{\h}_{b,k}$ as well.
Moreover, in order to impose the blockage on the user association variables, $\alpha_{b,k}^n$, the following constraints need to be considered. 
\begin{equation}
    \alpha_{b,k}^n \leq \psi_{b,k}, \forall b,k.
\end{equation}

\subsubsection{UMB Channel}
The path-loss at the UMB frequency range between user $k$ and UBS $r$ is modeled by:
\begin{equation}\label{eqn:MB-PathLoss}
    {PL}_{r,k} = \frac{c \ {(G_{\mathrm{Rx}}^{\mathrm{\U}}G_{\mathrm{Tx}}^{\mathrm{\U}})}^{\frac{1}{2}}}{4\pi f_{\U}} \left(\frac{1}{{d}_{r,k}}\right)^{\frac{1}{2}\alpha},
\end{equation}
where $G_{\mathrm{Rx}}^{\mathrm{M}}$, $G_{\mathrm{Tx}}^{\mathrm{M}}$ are the antenna gain of the transmitter and receiver at UMB, respectively, $d_{r,k}$ is the distance between UBS $r$ and user $k$, $\alpha$ is the path-loss exponent, and $f_{\mathrm{M}}$ is the UMB carrier frequency. Therefore, the channel between UBS $r$ and user $k$ is given by:
\begin{equation}\label{eqn:MB-Channel}
    \g_{r,k} \hspace*{-1mm}=\hspace*{-0.5mm} {PL}_{r,k}\hspace*{-1mm}\left(\hspace*{-1mm}\sqrt{\frac{\kappa}{1+\kappa}}\boldsymbol{b}_{r,k}(d_{r,k},\theta_{r,k},v_{k}) \hspace*{-0.5mm} +\hspace*{-0.5mm} \sqrt{\frac{1}{1+\kappa}} \g_{r,k}^{\mathrm{NLoS}}\hspace*{-1mm}\right)\hspace*{-0.1mm},
\end{equation}
where $\g_{r,k} \in \mathbb{C}^{M^\U \times 1}$ is the channel between UBS $r$ and $k$-th user, $\kappa$ is the Rician factor, and $\g_{r,k}^{\mathrm{NLoS}}$ is the NLoS component following zero mean and unit variance complex Gaussian distribution.

\subsection{\textcolor{black}{Downlink SINR and Rate Model}}
This subsection explains the signal model in the cooperative MBN system. 
The received signal at the THz terminal of user $k$ is given by:
\begin{align}\label{eqn:THzRxSignal}
    & y_{k}^\T = \underbrace{\sum\limits_{b=1}^{B^\T}\alpha_{b,k}^n\h_{b,k}^H \F_{b} \w_{b,k} s_{k}  \hspace{-1mm}}_{\text{Desired signals}} + \underbrace{ \sum\limits_{\kp\neq k}^{K}\sum\limits_{b=1}^{B^\T} \alpha_{b,\kp}^n \h_{b,k}^H \F_b \w_{b,\kp} s_{\kp} \hspace{-1mm}}_{\text{Multi-user Interference}} \notag \\ & +z^\T +\tilde{z}_k,
\end{align}
where ${s}_{k} \sim \mathcal{CN}(0,1)$ is the independent and identically distributed (i.i.d.) information symbol of user $k$. $\w_{b,k} \in \mathbb{C}^{N^\T\times 1}$ is the digital beamformer, which is responsible for encoding the $k$-th user's message signal. We consider the following HBF architectures, i.e.,
\begin{itemize}
    \item \textbf{Fully-connected HBF:} where all antenna elements are connected to all of the RF chains through a network of $N^\T M^\T$ phase shifters. The analog beamformer at TBS $b$ is denoted by $\F_b = [\boldsymbol{f}_{b,1},\dots,\boldsymbol{f}_{b,N^{\T}}] \in \mathbb{C}^{M^\T\times N^\T}$. 
    \item \textbf{Partially-connected HBF:}  where each RF chain is connected to ${M^\T}/{N^\T}$ sub-array of antennas.  The analog beamformer can then be given by $\F_b = \textrm{blkdiag}\left([\boldsymbol{f}_{b,1},\dots,\boldsymbol{f}_{b,N^{\T}}]\right) \in \mathbb{C}^{M^\T\times N^\T}$ with  $M^\T$ phase shifters, where $\boldsymbol{f}_{b,i} \in \mathbb{C}^{\frac{M^\T}{N^\T} \times 1}$ 
    is the analog beamformer connected to the $i$-th RF chain via $\frac{M^\T}{N^\T}$ phase shifters.   
\end{itemize}

Moreover, in \eqref{eqn:THzRxSignal}, $\alpha_{b,k}^n$ is the association variable, which equals one if user $k$ associates with TBS $b$ at trajectory point $n$ and zero otherwise. $z^\T$ is the additive Gaussian noise with zero mean and variance of $\omega^\T N_0$, such that $\omega^\T$ is the bandwidth at the THz band and $N_0$ is the power spectral density of the additive noise. 
Furthermore, $\tilde{z} \sim \mathcal{CN}(0,\tilde{\sigma}^2)$ denotes the molecular absorption noise, resulting from the re-radiation of transmitted signals by airborne molecules and is modeled as additive Gaussian noise \cite{ye2021modeling}. \textcolor{black}{The molecular absorption noise can be comparable to both thermal noise and interference, and must be accounted for to avoid overestimating the SINR \cite{saeidi2024molecularabsorptionaware}.} Building upon \cite{ye2021modeling,pradhan2023robust}, in the considered cooperative multi-antenna network with coherent transmissions, the variance of the molecular absorption noise at user $k$ is given by: \vspace{-1mm}
\begin{align}\label{eqn:MolecNoise}
    \tilde{\sigma}^2_k = \sum\limits_{i=1}^K \sum\limits_{j=1}^{B^\T} \sum\limits_{m=1}^{B^\T} \htilde_{j,k}^H \boldsymbol{\bar{w}}_{j,i}\boldsymbol{\bar{w}}_{m,i}^H \htilde_{m,k},
\end{align}
where $\boldsymbol{\bar{w}}_{j,i} = \boldsymbol{F}_j \boldsymbol{w}_{j,i}$.  

Considering the interference and molecular absorption noise, the signal-to-interference-plus-noise ratio (SINR) at the THz band for user $k$ is given by:
\begin{equation}\label{eqn:THzSINR}
\begin{split}  
    \gamma_{k}^\T= \frac{\abss{\sum\limits_{b=1}^{B^\T}\alpha_{b,k}^n\h_{b,k}^H \F_{b} \w_{b,k}}}{\sum\limits_{\kp\neq k}^{K} \abss{\sum\limits_{b=1}^{B^\T} \alpha_{b,\kp}^n\h_{b,k}^H \F_b \w_{b,\kp}} + \tilde{\sigma}^2_k + N_0\omega^\T}
\end{split}
\end{equation}

The received signal at UMB terminal of user $k$ is given by: 
\begin{equation}\label{eqn:RFRxSignal}
    {y}_{k}^\U \hspace*{-1mm}=\hspace*{-1mm} \underbrace{\sum\limits_{r=1}^{B^\U}\beta_{r,k}^n\g_{r,k}^H \Q_{r} \u_{r,k} \hat{s}_{k}  \hspace{-1mm}}_{\text{Desired signals}} +\hspace*{-2mm} \underbrace{ \sum\limits_{\kp\neq k}^{K}\sum\limits_{r=1}^{B^\U} \beta_{r,\kp}^n \g_{r,k}^H \Q_r \u_{r,\kp} \hat{s}_{\kp} \hspace{-1mm}}_{\text{Multi-user Interference}} + {z^\U},
\end{equation}
where $\Q_r =[\boldsymbol{q}_{r,1},\dots,\boldsymbol{q}_{r,N^\U}] \in \mathbb{C}^{M^\U\times N^\U}$ denotes the analog beamformer (can be either fully or partially-connected), $\u_{r,k}\in \mathbb{C}^{N^\U\times 1}$ is the digital beamformer for encoding the message of user $k$, i.e., $\hat{s}_{k} \sim \mathcal{CN}(0,1)$, at the $r$-th UBS. $ z^\U$ is the additive thermal noise following a Gaussian distribution with zero mean and variance of $N_0 \omega^\U$, wherein $\omega^\U$ is the channel bandwidth of the UMB. The association variable at point $n$ is denoted by $\beta_{r,k}^n$, such that if user $k$ is associated with UBS $r$, we have $\beta_{r,k}^n=1$, and $\beta_{r,k}^n=0$, otherwise. 


Based on the received signal in \eqref{eqn:RFRxSignal}, the SINR of the UMB at user $k$ is given by: 
\begin{equation}\label{eqn:RF-SINR}
    {\gamma}_{k}^\U = \frac{\abss{\sum\limits_{r=1}^{B^\U}\beta_{r,k}^n\g_{r,k}^H \Q_{r} \u_{r,k}}}{\sum\limits_{\kp\neq k}^{K}\abss{\sum\limits_{r=1}^{B^\U} \beta_{r,\kp}^n \g_{r,k}^H \Q_r \u_{r,\kp}}  + N_0 {\omega^\U}}.
\end{equation}

Based on the SINR terms in \eqref{eqn:THzSINR} and \eqref{eqn:RF-SINR}, and the dual-connectivity from both THz and UMB \cite{aboagye2023energy}, the overall achievable rate at user $k$ is obtained by:
\begin{equation}\label{eqn:RateUser}
    R_k = \omega^\T \log_2(1+\gamma_k^\T) + {\omega^\U} \log_2(1+\gamma^{\U}_k)
\end{equation}

\section{Joint User Association and Hybrid Beamforming in Cooperative THz/UMB MBN}
In this section, we formulate and solve the resource allocation problem with a focus on maximizing the total network sum-rate when users are stationary. 

\subsection{Problem Formulation}
We formulate the downlink\footnote{\textcolor{black}{In uplink, channel reciprocity in time division duplexing can be exploited to perform resource allocation by optimizing the users' power allocation and the receive combiners at the CPU, to maximize the uplink sum-rate.}} sum-rate maximization problem as:
\begin{align}
&\pr_1:\underset{\substack{\alpha^n_{b,k},\w_{b,k},\F_b \\ \beta^n_{r,k},\u_{r,k}, \Q_r}}{\textrm{maximize}} \,\, \,\,U_1 = \sum\limits_{k=1}^K R_k(\alpha_{b,k}^n,\beta_{r,k}^n,\w_{b,k},\u_{b,k},\F_b,\Q_r) \notag \\
&\mbox{s.t.}\hspace*{4mm}
\mathbf{C_1}: \sum\limits_{b=1}^{B^\T} \alpha_{b,k}^n \leq \mathcal{C}^\T_k, \ \forall k, \hspace*{4mm}\mathbf{C_2}: \sum\limits_{r=1}^{B^\U} \beta_{r,k}^n \leq \mathcal{C}^\U_k, \forall k \notag\\
& \mathbf{C_3}:\sum\limits_{k=1}^{K} \alpha_{b,k}^{n}\leq N^\T, \forall b, \hspace*{12mm}
\mathbf{C_4}:\sum\limits_{k=1}^{K} \beta_{r,k}^{n}\leq N^\U, \forall r \notag \\ 
& \mathbf{C_5}: \sum\limits_{k=1}^K\abss{\alpha_{b,k}^n \F_b\w_{b,k}} \leq  P_{\max}^{\T}, \forall b, \notag \\ 
& \mathbf{C_6}: \sum\limits_{k=1}^K\abss{\beta_{r,k}^n \Q_r\u_{r,k}} \leq  P_{\max}^{\U}, \forall r, \hspace*{2mm} \mathbf{C_7}: \alpha_{b,k}^n \leq \psi_{b,k}, \forall b,k, \notag \\  &\mathbf{C_8}: \abssOne{[\F_b]_{i,j}} =1 , \forall i,j,b, \hspace*{2mm} \mathbf{C_{9}}: \abssOne{[\Q_r]_{i,j}} = 1, 
\forall i,j,r, \notag \\  
&\mathbf{C_{10}}: R_k \geq R_{k}^{\mathrm{Th}}, \forall k, \hspace*{2mm} \mathbf{C_{11}}: \alpha_{b,k}^n,\beta_{r,k}^n \in \mathbb{B}, \forall b,r,k.\notag
\end{align}
In the optimization problem $\pr_1$, to reduce the communication overhead, the number of BSs a user can connect to is limited in each frequency band. Hence, $\mathbf{C_{1}}$ and $\mathbf{C_{2}}$ denote the total number of BSs (user-centric cluster size) that a user can associate with in the THz ($\mathcal{C}^\T_k$) and UMB ($\mathcal{C}^\U_k$), respectively. $\mathbf{C_{3}}$ and $\mathbf{C_{4}}$ constrain the number of users served by each BS in the respective frequency band to the number of RF chains at the BS. Constraints $\mathbf{C_{5}}$ and $\mathbf{C_{6}}$ ensure that the transmit power allocated to all users associated with a given BS does not exceed the power budget, $P_{\max}^{\T}$ for THz and $P_{\max}^{\U}$ for UMB, respectively. $\mathbf{C_{7}}$ ensures that a THz link is not assigned to a TBS-user pair if it is blocked. $\mathbf{C_{8}}$ and $\mathbf{C_{9}}$ impose the constant modulus constraints on the analog beamformers at TBSs and UBSs, respectively, ensuring that each phase shifter maintains a constant amplitude. 
Moreover, $\mathbf{C_{10}}$ is the QoS constraint considering a minimum rate of $R_{k}^{\mathrm{Th}}$ for user $k$, and $\mathbf{C_{11}}$ denotes that the association variables are binary. 

The optimization problem in $\pr_1$ is mixed-integer nonlinear programming (MINLP), which is non-convex, NP-hard, and cannot be directly solved in polynomial time using traditional methods. The non-convexity of the problem arises from several factors: (i) The objective function is a non-convex function due to the fractional form and multiplication of the optimization variables, (ii) the power budget constraints in $\mathbf{C_{5}}$ and $\mathbf{C_{6}}$ are non-convex, (iii) the unit modulus constraints in $\mathbf{C_{8}}$ and $\mathbf{C_{9}}$ are non-convex, and (iv) the QoS constraint in $\mathbf{C_{10}}$ is also non-convex. It is worth noting that the QoS constraint can usually be reformulated in a convex format. However, in this problem, due to the different bandwidths in the considered MBN, i.e., $\omega^\T$ and $\omega^\U$, the constraint cannot be reformulated into a convex function. Also, the binary association variables further render the problem intractable. Therefore, our proposed solution methodology is given as: \textbf{(1)} determining feasible analog beamformers to maximize array gain, \textbf{(2)} transforming the coupled user association and digital beamforming variables  using big-M approach, \textbf{(3)} convexifying the SINR terms in the objective and QoS constraints using FP, \textbf{(4)} utilizing MM and penalty methods to address the user association variables, and \textbf{(5)} proposing an iterative algorithm to solve the joint user association and digital beamforming problem.

\subsection{Analog Beamforming}
By formulating the analog beamformer problem as an array gain maximization problem, which is independent of the digital beamformers and user associations, a sub-optimal solution for the analog beamformers $\F_b$ and $\Q_r$ can be determined by extending \cite{fang2021hybrid} for MBNs.
That is, the upper bound of the array gain between TBS $b$ and and user $k$ as well as UBS $r$  and user $k$ can be obtained using Triangle and Cauchy-Schwartz inequality as follows~\cite{fang2021hybrid}:
\begin{align}\label{eq:ArrayGains}
    & \norm{\boldsymbol{B}_{b} \boldsymbol{F}_b \boldsymbol{W}_{b} + \boldsymbol{G}_{r} \boldsymbol{Q}_r \boldsymbol{U}_{r}}_F^2 \leq \norm{\boldsymbol{B}_{b} \boldsymbol{F}_b}_F^2 \norm{\boldsymbol{W}_b}_F^2 \notag \\ & + \norm{\boldsymbol{G}_{r} \boldsymbol{Q}_r}_F^2 \norm{\boldsymbol{U}_r}_F^2 + 2 \Re(\langle \boldsymbol{B}_{b}\boldsymbol{F}_b \boldsymbol{W}_{b} ,\boldsymbol{G}_{r}\boldsymbol{Q}_r \boldsymbol{U}_{r}\rangle) \notag \\ & = \left(\abss{\b_{b,k}^H \boldsymbol{f}_{b,k}} + \norm{(\boldsymbol{B}_b)_{-k,:}(\boldsymbol{F}_b)_{:,-k}}_F^2\right)\norm{\boldsymbol{W}_b}_F^2 \notag \\ & + \left(\abss{\g_{r,k}^H \boldsymbol{q}_{r,k}} + \norm{(\boldsymbol{G}_r)_{-k,:}(\boldsymbol{Q}_r)_{:,-k}}_F^2\right)\norm{\boldsymbol{U}_r}_F^2,
\end{align}
where $\boldsymbol{B}_b = [\b_{b,1},\dots,\b_{b,K}]^H$ (i.e., $\b_{b,k} = \boldsymbol{a}_{b,k}$ for $v_{b,k} =0$), $\boldsymbol{W}_b = [\alpha_{b,1}^n\w_{b,1},\dots,\alpha_{b,K}^n\w_{b,K}]$, $\boldsymbol{G}_r =[ \g_{r,1},\dots,\g_{r,K}]^H$, $\boldsymbol{U}_r = [\beta_{r,1}^n\u_{r,1},\dots,\beta_{r,K}^n\u_{r,K}]$. $(\boldsymbol{X})_{-k,:}$ and $(\boldsymbol{X})_{:,-k}$ denote the sub-matrices of $\boldsymbol{X}$ after removing $k$-th row and $k$-th column, respectively. 
Also, it is noteworthy that the inner product $\langle \boldsymbol{B}_{b}\boldsymbol{F}_b \boldsymbol{W}_{b} ,\boldsymbol{G}_{r}\boldsymbol{Q}_r \boldsymbol{U}_{r}\rangle =0$ due to the orthogonality of THz and UMB frequencies. 

Given $\boldsymbol{W}_b$ and $\boldsymbol{U}_r$, the analog beamforming for each pair of user and BS, which is feasible and maximizes the array gain\footnote{$\norm{(\boldsymbol{B}_b)_{-k,:}(\boldsymbol{F}_b)_{:,-k}}_F^2$ and $\norm{(\boldsymbol{G}_r)_{-k,:}(\boldsymbol{Q}_r)_{:,-k}}_F^2$ in \eqref{eq:ArrayGains} represent the interference which need not be considered while maximizing the array gain.  However, once the feasible analog beamformers are determined, the interference  will be considered in the optimization.
}, is obtained by solving the following problem:    
\begin{align}\label{Analog_SubProb}
&\underset{\boldsymbol{f}_{b,k}, \boldsymbol{q}_{r,k}}{\textrm{maximize}} \,\, \,\,\abss{{{\b}_{b,k}^H} \boldsymbol{f}_{b,k}} + \abss{{{\boldsymbol{g}_{r,k}^H}} \boldsymbol{q}_{r,k}}\\
&\mbox{s.t.}\hspace*{4mm}
\mathbf{C_5},\mathbf{C_6},\mathbf{C_8},\mathbf{C_9} \notag
\end{align}
Solving \eqref{Analog_SubProb} requires the number of RF chains to be equal to the number of users, that is $N^\T = N^\U = K$. Hence, constraints $\mathbf{C_{3}}$ and $\mathbf{C_{4}}$ are inherently  satisfied.
For fully-connected HBF, the solution to \eqref{Analog_SubProb} is given by:

\vspace{1mm}
\begin{minipage}{0.45\columnwidth}
    \begin{equation}
      \bar{\F}_b = \boldsymbol{B}_b^\mathbb{T}
      \label{eqn:THz-AnalogBeam-FC}
    \end{equation}
\end{minipage}
\hfill
\begin{minipage}{0.45\columnwidth}
    \begin{equation}
      \bar{\Q}_r = e^{j\angle \ \boldsymbol{G}_r^\mathbb{T}}
      \label{eqn:MB-AnalogBeam-FC}
    \end{equation}
\end{minipage}

\vspace*{2mm}
\noindent For the case of partially-connected, we have:
\begin{equation}\label{eqn:THz-AnalogBeam-PC}
    \bar{\F}_b = (\boldsymbol{I}_{K} \otimes \boldsymbol{1}_{\frac{M^\T}{K}}) \odot  \boldsymbol{B}_b^\mathbb{T},
\end{equation}
\begin{equation}\label{eqn:MB-AnalogBeam-PC}
\bar{\Q}_r = (\boldsymbol{I}_{K} \otimes \boldsymbol{1}_{\frac{M^\U}{K}}) \odot  e^{j\angle \ \boldsymbol{G}_r^\mathbb{T}}.
\end{equation}
The analog beamformers obtained above also conform to the unit modulus constraints, thus $\mathbf{C_{8}}$ and $\mathbf{C_{9}}$ are satisfied. In order to prove the feasibility  of $\mathbf{C_5}$ and $\mathbf{C_6}$, we first reformulate the constraints as follows: 
\begin{equation} \label{eqn:THz-Power-FrobNorm}
\begin{split}
    \mathbf{C_5}: \norm{\F_b\boldsymbol{W}_b}_F^2 &= \mathrm{tr}(\boldsymbol{W}_b^H \F_b^H \F_b \boldsymbol{W}_b) \\ & = \mathrm{tr}(\F_b^H \F_b \boldsymbol{W}_b \boldsymbol{W}_b^H) \leq P_{\max}^{\T},
\end{split}
\end{equation}
\begin{equation} \label{eqn:RF-Power-FrobNorm}
\begin{split}
    \mathbf{C_6}: \norm{\Q_r\boldsymbol{U}_r}_F^2 &= \mathrm{tr}(\boldsymbol{U}_r^H \Q_r^H \Q_r \boldsymbol{U}_r) \\ & = \mathrm{tr}( \Q_r^H \Q_r \boldsymbol{U}_r \boldsymbol{U}_r^H) \leq P_{\max}^{\U},
\end{split}
\end{equation}
where both \eqref{eqn:THz-Power-FrobNorm} and \eqref{eqn:RF-Power-FrobNorm} utilize cyclic property of trace of matrices. Then, we depict the feasibility of the analog beamformers for $\mathbf{C_{5}}$ and $\mathbf{C_{6}}$ as shown below.
\begin{remark}
    In a hybrid beamformer with $M$ antenna elements and $N$ RF chains, for large values of $M$, the analog beamformer $\F$ satisfies $\F^H\F \approx M \boldsymbol{I}_{N}$ in a fully-connected architecture, and $\F^H\F = \frac{M}{N}\boldsymbol{I}_N$ in a partially-connected architecture \cite{HBF-1}.
\end{remark}
Based on the result of \textbf{Remark 1}, the power budget constraints are reformulated as:
\begin{align}
    \mathbf{C_5}: \norm{\boldsymbol{W}_b}_F^2 \leq \bar{P}_{\max}^{\T}, \hspace*{2mm} \mathbf{C_6}: \norm{\boldsymbol{U}_r}_F^2 \leq \bar{P}_{\max}^{\U},
\end{align}
where $\bar{P}_{\max}^{\T} = \frac{P_{\max}^{\T}}{M^\T}$ and $\bar{P}_{\max}^{\U} = \frac{P_{\max}^{\U}}{M^\U}$ for the case of fully-connected HBF, and $\bar{P}_{\max}^{\T} = \frac{P_{\max}^{\T} K}{M^\T}$ and $\bar{P}_{\max}^{\U} = \frac{P_{\max}^{\U} K}{M^\U}$
for partially-connected HBF. Therefore, the obtained analog beamformers are feasible for $\pr_1$.

\subsection{Problem Transformation}
After obtaining the analog beamformers $\bar{\F}_b$ and $\bar{\Q}_r$, we first address the nonlinearity of the multiplication of the binary variables and beamformers in the objective function and the power budget constraints in $\mathbf{C_{5}}$ and $\mathbf{C_{6}}$, and QoS in $\mathbf{C_{10}}$. It is important to note that multiplying the binary variables by the digital beamformers results in a zero vector if a user is not associated with a given BS, making the constraints non-convex. To tackle this, we utilize the \textit{big}-M method \cite{big-M}, which allows us to decouple the association variables from the SINR terms and the power budget constraints. First, we introduce variables $p_{b,k} = \abss{\w_{b,k}}$ and $\delta_{r,k} = \abss{\u_{r,k}}$, denoting the power of the digital beamformers for each pair of user-BS at TBSs and UBSs, respectively. The following constraints guarantee that the beamformers remain non-zero whenever the association variables are non-zero, and zero, otherwise.
\begin{subequations}
  \begin{minipage}{0.45\columnwidth}
    \begin{equation}
      \abss{\w_{b,k}} \leq \alpha_{b,k}^n p_{b,k}
      \label{eqn:THz-Power-NonConvex}
    \end{equation}
  \end{minipage}
  \hfill
  \begin{minipage}{0.45\columnwidth}
    \begin{equation}
      \abss{\u_{r,k}} \leq \beta_{r,k}^n \delta_{r,k}
      \label{eqn:RF-Power-NonConvex}
    \end{equation}
  \end{minipage}
\end{subequations}
However, both constraints are still non-convex. To tackle this, the right-hand side of \eqref{eqn:THz-Power-NonConvex} and \eqref{eqn:RF-Power-NonConvex} are expanded as follows:
\begin{subequations}
\begin{equation} \label{eqn:THz-Power-NonConvex-2}
\abss{2 \w_{b,k}} \leq {\left(\alpha_{b,k}^n + p_{b,k}\right)}^2 - {\left(\alpha_{b,k}^n - p_{b,k}\right)}^2
\end{equation}
\begin{equation} \label{eqn:RF-Power-NonConvex-2}
\abss{2 \u_{r,k}} \leq {\left(\beta_{r,k}^n + \delta_{r,k}\right)}^2 - {\left(\beta_{r,k}^n - \delta_{r,k}\right)}^2.
\end{equation}
\end{subequations}
After some manipulations, the above constraints can be written in the following convex formulation:
\begin{subequations}
\begin{equation} \label{eqn:THz-Power-Convex}
\abssOne{[2 \w^H_{b,k}, \alpha_{b,k}^n - p_{b,k}]} \leq \alpha_{b,k}^n + p_{b,k},
\end{equation}
\begin{equation} \label{eqn:MB-Power-Convex}
\abssOne{[2 \u^H_{r,k}, \beta_{r,k}^n - \delta_{r,k}]} \leq \beta_{r,k}^n + \delta_{r,k},
\end{equation}
\end{subequations}
Moreover, the power budget constraints in $\mathbf{C_{5}}$ and $\mathbf{C_{6}}$ are reformulated as follows:
\begin{align}
    & \mathbf{C_{5.2}}:\sum\limits_{k=1}^K p_{b,k} \leq \bar{P}_{\max}^{\T}, \forall b, \hspace*{2mm} \mathbf{C_{5.3}}: p_{b,k} \leq \alpha_{b,k}^n  \bar{P}_{\max}^{\T}, \forall b, k \notag \\ & \mathbf{C_{6.2}}: \sum\limits_{k=1}^K \delta_{r,k} \leq \bar{P}_{\max}^{\U}, \forall r, \hspace*{2mm} \mathbf{C_{6.3}}:  \delta_{r,k} \leq \beta_{r,k}^n  \bar{P}_{\max}^{\U}, \forall r, k. \notag
\end{align}
After the above-mentioned transformations and decoupling of the binary variables from the digital beamformers, the functions in $U_1$ and QoS constraints in $\mathbf{C_{10}}$ also reduce to $R_k(\w_{b,k},\u_{b,k},\bar{\F}_b,\bar{\Q}_r)$. For given analog beamformers and after decoupling the association variables and digital beamformers, $\pr_1$ can be reformulated as follows:
\begin{align}
&\pr_2:
\underset{\substack{\alpha^n_{b,k},\w_{b,k},p_{b,k} \\ \beta^n_{r,k},\u_{r,k}, \delta_{r,k}}}{\textrm{maximize}} \,\, \,\,U_1 \\
&\mbox{s.t.}\hspace*{4mm}
\mathbf{C_{5.1}}: \eqref{eqn:THz-Power-Convex}, \mathbf{C_{5.2}},\mathbf{C_{5.3}},\mathbf{C_{6.1}}: \eqref{eqn:MB-Power-Convex}, \mathbf{C_{6.2}},\mathbf{C_{6.3}}, \notag \\ 
& \mathbf{{C}_{1}},\mathbf{{C}_{2}}, \mathbf{{C}_{7}}, \mathbf{{C}_{10}},\mathbf{{C}_{11}}\notag.
\end{align}
Problem $\pr_2$ is still a MINLP due to the non-convexity of the objective function, QoS constraint, and the binary variables. 

\subsection{Joint User Association and  Beamforming}

\subsubsection{FP-based Approach}
First, we develop an equivalent transformation of the general FP problems with fractional objective function and constraints. This transformation leverages on quadratic transformation method introduced in \cite{FP-Part1, FP-Part2}.
The transformation is presented in the following Lemma.
\begin{lemma}\label{Lemma1}
    Given a sequence of logarithm functions of ratios $\frac{A_{i,j}(\boldsymbol{X})}{B_{i,j}(\boldsymbol{X})}$, for $i \in \{1,\dots,I\}$ and $j \in \{1,\dots,J\}$, and $\mathcal{{X}}$ being the convex feasible set of variable $\boldsymbol{X}$, the maximization problem subject to the QoS constraint formulated as:
    \begin{align}\label{Lem1-P1}
        &\underset{\boldsymbol{X}}{{\textrm{maximize}}} \,\, \,\,\sum\limits_{i}\sum\limits_{j} \log\left(\frac{A_{i,j}(\boldsymbol{X})}{B_{i,j}(\boldsymbol{X})}\right)
        \\
        &\mbox{s.t.}\hspace*{4mm}
        \sum\limits_{j} \log\left(\frac{A_{i,j}(\boldsymbol{X})}{B_{i,j}(\boldsymbol{X})}\right) \geq R^{\textrm {Th}}_i, \ \boldsymbol{X} \in \mathcal{X} \notag
    \end{align}

    is equivalent to
    \vspace{-2mm}
    \begin{align}\label{Lem1-P2}
     &\underset{\boldsymbol{X},\boldsymbol{y}}{\textrm{maximize}} \,\, \,\,\hspace*{-3mm}\sum\limits_{i} \sum\limits_{j} \log\hspace*{-0.5mm}\left(\hspace*{-0.5mm}2\Re\left\{y_{i,j}^*\sqrt{A_{i,j}(\boldsymbol{X})}\right\}-\abss{y_{i,j}}B_{i,j}(\boldsymbol{X})\hspace*{-0.5mm}\right)
        \\
        &\mbox{s.t.}
        \sum\limits_{j} \log\hspace*{-0.5mm}\left(\hspace*{-0.5mm}2\Re\left\{y_{i,j}^*\sqrt{A_{i,j}(\boldsymbol{X})}\right\}\hspace*{-0.5mm}-\hspace*{-0.5mm}\abss{y_{i,j}}B_{i,j}(\boldsymbol{X})\hspace*{-0.75mm}\right)\hspace*{-0.5mm} \geq\hspace*{-0.5mm} R^{\textrm{Th}}_i, \forall i \notag,
    \end{align}
    where the optimal $\boldsymbol{y}$ is $y^o_{i,j} = \frac{\sqrt{A_{i,j}(\boldsymbol{X})}}{B_{i,j}(\boldsymbol{X})}$.
\end{lemma}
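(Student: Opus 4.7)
The plan is to leverage the quadratic transform of Shen and Yu (cited as FP-Part1, FP-Part2), which for a single ratio states that $\frac{A(\boldsymbol{X})}{B(\boldsymbol{X})} = \max_{y} \bigl( 2\Re\{y^*\sqrt{A(\boldsymbol{X})}\} - |y|^2 B(\boldsymbol{X})\bigr)$, with the maximizer $y^o = \sqrt{A(\boldsymbol{X})}/B(\boldsymbol{X})$ and the maximum attained by completing the square in $y$. The main task is to lift this pointwise identity to a \emph{joint} equivalence involving (i) a logarithmic wrapper on each ratio, (ii) a sum over $(i,j)$, and (iii) QoS constraints of the same log-ratio form. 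My strategy is to first verify the equivalence on the feasible set, then on the objective, and finally argue that the two optimization problems attain identical optima at identical $\boldsymbol{X}$.

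First, I would fix $\boldsymbol{X}\in\mathcal{X}$ and show that the QoS constraints of \eqref{Lem1-P1} and \eqref{Lem1-P2} are mutually implied. The $(\Rightarrow)$ direction follows by substituting $y_{i,j}=y^o_{i,j}$; a direct calculation gives $2\Re\{y^{o*}_{i,j}\sqrt{A_{i,j}}\}-|y^o_{i,j}|^2 B_{i,j} = A_{i,j}/B_{i,j}$, so the inequality is preserved with equality inside each $\log$. For the $(\Leftarrow)$ direction, I would use the inequality $2\Re\{y^*\sqrt{A}\}-|y|^2 B \le A/B$ (with equality iff $y=y^o$), which follows from completing the square. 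Since $\log$ is monotonically increasing, this inequality transfers term-by-term inside the sum, and hence any $(\boldsymbol{X},\boldsymbol{y})$ feasible in \eqref{Lem1-P2} satisfies the QoS constraint of \eqref{Lem1-P1}. The constraint $\boldsymbol{X}\in\mathcal{X}$ is shared and needs no modification.

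Next, I would handle the objective. Because the $y_{i,j}$'s are independent across $(i,j)$ and the sum of logarithms decomposes termwise, the inner maximization over $\boldsymbol{y}$ can be pushed inside the summation and inside each $\log$ by the monotonicity of $\log$. Applying the single-ratio identity at each $(i,j)$ yields
\begin{equation*}
\max_{\boldsymbol{y}} \sum_{i,j}\log\!\bigl(2\Re\{y_{i,j}^*\sqrt{A_{i,j}}\}-|y_{i,j}|^2 B_{i,j}\bigr) = \sum_{i,j}\log\!\bigl(A_{i,j}(\boldsymbol{X})/B_{i,j}(\boldsymbol{X})\bigr),
\end{equation*}
so that $\max_{\boldsymbol{X},\boldsymbol{y}}$ of \eqref{Lem1-P2} reduces to $\max_{\boldsymbol{X}}$ of \eqref{Lem1-P1}. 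Combined with the feasibility equivalence established above, this shows that both problems share the same optimal value and that any optimal $\boldsymbol{X}^\star$ of \eqref{Lem1-P1} extends to an optimal $(\boldsymbol{X}^\star,\boldsymbol{y}^\star)$ of \eqref{Lem1-P2} via $y^\star_{i,j}=\sqrt{A_{i,j}(\boldsymbol{X}^\star)}/B_{i,j}(\boldsymbol{X}^\star)$.

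The main obstacle I anticipate is ensuring that the joint maximization over $(\boldsymbol{X},\boldsymbol{y})$ in \eqref{Lem1-P2} does not enlarge the feasible set beyond what the constraint-side quadratic-transform bound allows; this is precisely why the inequality $2\Re\{y^*\sqrt{A}\}-|y|^2 B\le A/B$ (rather than just the equality at $y^o$) is needed, to guarantee that \emph{every} feasible $(\boldsymbol{X},\boldsymbol{y})$ of \eqref{Lem1-P2} projects to a feasible $\boldsymbol{X}$ in \eqref{Lem1-P1}. A minor subtlety is the nonnegativity of the argument of each $\log$ in \eqref{Lem1-P2}: the quadratic form in $y_{i,j}$ is bounded above by $A_{i,j}/B_{i,j}>0$, and its maximum over $y_{i,j}$ is positive, so the $\log$ is well-defined at the optimum; feasibility of \eqref{Lem1-P2} implicitly requires the argument to be positive, which is consistent with \eqref{Lem1-P1}.
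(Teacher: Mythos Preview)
Your proposal is correct and follows essentially the same approach as the paper's proof: both rely on the single-ratio quadratic-transform identity $A/B = \max_y\bigl(2\Re\{y^*\sqrt{A}\} - |y|^2 B\bigr)$ (attained at $y^o = \sqrt{A}/B$), the corresponding upper-bound inequality obtained by completing the square, and the monotonicity of $\log$ to pass these through the sums and the QoS constraints. Your organization via explicit two-sided feasibility and objective bounds is slightly cleaner than the paper's contradiction argument (which omits verifying that the hypothetical $\boldsymbol{X}'$ is feasible in \eqref{Lem1-P1}), but the substance is the same.
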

\begin{proof}
See \textbf{Appendix A}.
\end{proof}
Based on the transformation in \textbf{Lemma \ref{Lemma1}}, for fixed auxiliary variable $\mu_{k} \in \mathbb{C}$, the THz SINR expression at iteration $i$ can be formulated in the following quadratic format:
\begin{equation*}
    \gamma_{k}^{\T,{(i)}} = 2\Re\left\{\mu_{k}^* \sum\limits_{b=1}^{B^\T} \w_{b,k}^H \bar{\F}_{b}^H \h_{b,k} \right\} - \abss{\mu_{k}} \times
\end{equation*}
\begin{equation}\label{eqn:THz-SINR-FP}
    \Bigg( \sum\limits_{\kp\neq k}^{K} \abss{\sum\limits_{b=1}^{B^\T} \h_{b,k}^H \bar{\F}_b \w_{b,\kp}} +\tilde{\sigma}^2_k + N_0\omega^\T\Bigg)
\end{equation}
where the respective auxiliary variable can be obtained as:\vspace{-1mm}
\begin{equation}\label{eqn:THz-QuadAux}
    \mu_{k} = \frac{\sum\limits_{b=1}^{B^\T} \w_{b,k}^H \bar{\F}_{b}^H \h_{b,k}}{\sum\limits_{\kp\neq k}^{K} \abss{\sum\limits_{b=1}^{B^\T} \h_{b,k}^H \bar{\F}_b \w_{b,\kp}} +\tilde{\sigma}^2_k + N_0\omega^\T}.
\end{equation}
By introducing the quadratic auxiliary variable $\zeta_k \in \mathbb{C}$, the quadratic format of the SINR terms in the UMB at iteration $i$ is derived as:\vspace{-4mm}
\begin{equation*}
    \gamma_{k}^{\U,{(i)}}= 2\Re\left\{\zeta_{k}^* \sum\limits_{r=1}^{B^\U} \u_{r,k}^H \bar{\Q}_{r}^H \g_{r,k} \right\} 
\end{equation*}
\begin{equation}\label{eqn:MB-SINR-FP}
    - \abss{\zeta_{k}} \Bigg( \sum\limits_{\kp\neq k}^{K} \abss{\sum\limits_{r=1}^{B^\U} \g_{r,k}^H \bar{\Q}_r \u_{r,\kp}} + N_0\omega^\U\Bigg),
\end{equation}
and the variable $\zeta_k$ is given by:
\begin{equation}\label{eqn:MB-QuadAux}
    \zeta_{k} = \frac{\sum\limits_{r=1}^{B^\U} \u_{r,k}^H \bar{\Q}_{r}^H \g_{r,k}}{\sum\limits_{\kp\neq k}^{K} \abss{\sum\limits_{r=1}^{B^\U} \g_{r,k}^H \bar{\Q}_r \u_{r,\kp}} + N_0\omega^\U}.
\end{equation}
Replacing the non-convex SINR terms in the objective function and the QoS constraints by the derived quadratic format, problem $\pr_2$ reduces to the following convex-binary optimization problem $\pr_3$ at $i$-th iteration.
\vspace{-1mm}
\begin{align}
&\pr_3:
\underset{\substack{\alpha^n_{b,k},\w_{b,k},p_{b,k} \\ \beta^n_{r,k},\u_{r,k}, \delta_{r,k}}}{\textrm{maximize}} \,\,\hspace*{-1.5mm}\sum\limits_{k=1}^K \omega^\T\log(1+\gamma_k^{\T,(i)}) + {\omega^\U}\log(1+\gamma_{k}^{\U,(i)}) \notag\\
&\mbox{s.t.}\hspace*{0.1mm}
\mathbf{C_{10}}: \omega^\T\log(1+\gamma_k^{\T,(i)})\hspace*{-0.5mm} +\hspace*{-0.5mm} {\omega^\U}\log(1+\gamma_{k}^{\U,(i)}) \geq R^{\mathrm{Th}}_k, \forall k, \notag \\
&
\mathbf{{C}_{1}},\mathbf{{C}_{2}}, \mathbf{{C}_{5.1}}, \mathbf{{C}_{5.2}}, \mathbf{{C}_{5.3}}, \mathbf{{C}_{6.1}}, \mathbf{{C}_{6.2}}, \mathbf{{C}_{6.3}}, \mathbf{{C}_{7}},\mathbf{{C}_{11}}\notag.
\end{align}
A globally optimal solution of $\pr_3$ can be obtained by utilizing standard algorithms, such as CVX \cite{cvx} combined with MOSEK solver \cite{mosek}. To deal with the binary variables in $\pr_3$, Mosek employs branch-and-bound method in an iterative manner. Although a globally optimal solution to $\pr_3$ can be achieved, in the worst case, the computational complexity of this method can grow exponentially with a large number of BSs and users in the network.

\subsubsection{{Majorization-minimization}-based Approach}
In order to reduce the computational complexity that arises from the binary variables, $\alpha_{b,k}^n$ and $\beta_{r,k}^n$, we seek an MM-based method \cite{MM-Paper,Hyb-RSMA-NOMA} and penalty approach.
First, by introducing the following constraints replacing $\mathbf{{C}_{11}}$, the association variables can only take binary values:
\begin{equation*}
    \mathbf{C_{11.1}}: 0\leq \alpha_{b,k}^n \leq 1 , \forall b, k, \hspace*{0mm}
\mathbf{C_{11.2}}\hspace*{-0.5mm}:\hspace*{-1.5mm} \sum\limits_{b=1}^{B^\T}\sum\limits_{k=1}^K \hspace*{-1mm}\left(\alpha_{b,k}^n - {\alpha_{b,k}^n}^2\right)\hspace*{-0.5mm} \leq\hspace*{-0.5mm} 0,
\end{equation*}
\begin{equation}
    \mathbf{C_{11.3}}: 0\leq \beta_{r,k}^n \leq 1 , \forall r, k, \hspace*{0mm}
\mathbf{C_{11.4}}\hspace*{-0.5mm}: \hspace*{-1.5mm}\sum\limits_{r=1}^{B^\U}\sum\limits_{k=1}^K \hspace*{-1mm}\left(\beta_{r,k}^n - {\beta_{r,k}^n}^2\right)\hspace*{-0.5mm} \leq \hspace*{-0.5mm}0,\notag
\end{equation}
where $\mathbf{{C}_{11.2}}$ and $\mathbf{{C}_{11.4}}$ are non-convex. 
To overcome this, we use the penalty method by penalizing the objective function by adding these two sets of constraints with penalty factors of $\Gamma_1 \gg 0$ and $\Gamma_2 \gg 0 $ for $\mathbf{{C}_{11.2}}$ and $\mathbf{{C}_{11.4}}$, respectively. Therefore, problem $\pr_3$ can be reformulated into $\pr_4$ as follows:
\begin{align}
 &\pr_4:
\hspace*{2mm}\underset{\substack{\alpha^n_{b,k},\w_{b,k},p_{b,k} \\ \beta^n_{r,k},\u_{r,k}, \delta_{r,k}}}{\textrm{maximize}} \,\, \,\,U_1 - \Gamma_1 \mathcal{Z}_1 -\Gamma_2\mathcal{Z}_2\notag \\
&\mbox{s.t.}\hspace*{4mm}
\mathbf{{C}_{1}},\mathbf{{C}_{2}}, \mathbf{{C}_{5.1}}, \mathbf{{C}_{5.2}}, \mathbf{{C}_{5.3}}, \mathbf{{C}_{6.1}}, \mathbf{{C}_{6.2}}, \mathbf{{C}_{6.3}}, \notag \\
&\mathbf{{C}_{7}},\mathbf{{C}_{10}}, \mathbf{{C}_{11.1}},\mathbf{{C}_{11.3}}\notag,
\nonumber
\end{align}
where $\mathcal{Z}_1 = \sum\limits_{b=1}^{B_T}\sum\limits_{k=1}^K \alpha_{b,k}^n - {\alpha_{b,k}^n}^2$ and $\mathcal{Z}_2 = \sum\limits_{r=1}^{B_R}\sum\limits_{k=1}^K \beta_{r,k}^n - {\beta_{r,k}^n}^2$.
The objective function in $\pr_4$ is still non-convex due to adding the penalty functions. At this point, the MM approach is utilized, where the second quadratic term of the penalty function is replaced by its first-order Taylor approximation.
\begin{equation}\label{eqn:MM-method}
    \bar{\boldsymbol{f}}^{(i)}(x) \triangleq \boldsymbol{f}(x^{(i-1)}) + \nabla^{\mathbb{T}} \boldsymbol{f}\left(x^{(i-1)}\right)\left(x-x^{(i-1)}\right),
\end{equation}
where $x = \alpha_{b,k}^n$ for the THz association variables and $x = \beta_{r,k}^n$ for the UMB association variables. Thus, $\pr_5$ can be approximated by the following problem at iteration $i$.
\begin{align}
 &\pr_5:
\underset{\substack{\alpha^n_{b,k},\w_{b,k},p_{b,k} \\ \beta^n_{r,k},\u_{r,k}, \delta_{r,k}}}{\textrm{maximize}} \,\, \,\,U_1^{(i)} - \Gamma_1 \mathcal{L}_1^{(i)} - \Gamma_2 \mathcal{L}_2^{(i)}\\
&\mbox{s.t.}\hspace*{4mm}
\mathbf{{C}_{1}},\mathbf{{C}_{2}}, \mathbf{{C}_{5.1}}, \mathbf{{C}_{5.2}}, \mathbf{{C}_{5.3}}, \mathbf{{C}_{6.1}}, \mathbf{{C}_{6.2}}, \mathbf{{C}_{6.3}}, \notag \\
&\mathbf{{C}_{7}},\mathbf{{C}_{10}}, \mathbf{{C}_{11.1}},\mathbf{{C}_{11.3}}\notag.
\end{align}
where \vspace{-3mm}
\begin{subequations}
\begin{equation} \label{eqn:PenaltyAssoc-THz}
\mathcal{L}_1^{(i)}=\sum\limits_{b=1}^{B^\T}\sum\limits_{k=1}^K\left(\alpha_{b,k}^n\left(1-2\alpha_{b,k}^{n,(i-1)}\right)+{\left(\alpha_{b,k}^{n,(i-1)}\right)}^2\right),\notag
\end{equation}
\begin{equation} \label{eqn:PenaltyAssoc-MB}
\mathcal{L}_2^{(i)}=\sum\limits_{r=1}^{B^\U}\sum\limits_{k=1}^K\left(\beta_{r,k}^n\left(1-2\beta_{r,k}^{n,(i-1)}\right)+{\left(\beta_{r,k}^{n,(i-1)}\right)}^2\right),\notag
\end{equation}
\end{subequations}
and superscript $(i-1)$ denotes the value of the respective variable at the previous iteration. Problem $\pr_5$ is a convex optimization that can be solved using CVX \cite{cvx}. The proposed solution for solving $\pr_1$ is summarized in \textbf{Algorithm~1}.
\begin{algorithm}[]
    \caption{Joint User Association and Hybrid Beamforming Solution to $\pr_1$ and $\hat \pr_1$}
    \label{Algorithm-1}
    \textbf{Input}: Index of trajectory point, $n$, maximum number of iterations $L_{\max}$, stopping accuracy $\epsilon_1$ and $\epsilon_2$, penalty factors, $\Gamma_1$ and $\Gamma_2$. Initialize $\w_{b,k},\u_{r,k},\alpha_{b,k}^n,\beta_{r,k}^n, \gamma_{k}^{\T,(0)},\gamma_{k}^{\U,(0)} \ \forall b,r,k$ with feasible values.
    \begin{algorithmic}[1]  
    \State Obtain the analog beamformers $\bar{\F}_b, \ \forall b$ and $\bar{\Q}_r, \ \forall r$
    \If {Fully-connected HBF}
        \State Use \eqref{eqn:THz-AnalogBeam-FC} and \eqref{eqn:MB-AnalogBeam-FC} 
        \Else 
        \State Use \eqref{eqn:THz-AnalogBeam-PC} and \eqref{eqn:MB-AnalogBeam-PC} 
        \EndIf

        \For {$i=0,1,...,L_{\max}$}
        \State Update $\mu_{k}^{(i)} \forall k$ using \eqref{eqn:THz-QuadAux}
        \State Update $\zeta_{k}^{(i)} \forall k$ using \eqref{eqn:MB-QuadAux}
        \If {Fully-connected HBF}
        \State Use $\bar{P}_{\max}^{\T} = \frac{P_{\max}^{\T}}{M^\T}$ and $\bar{P}_{\max}^{\U} = \frac{P_{\max}^{\U}}{M^\U}$
        \Else 
        \State Use $\bar{P}_{\max}^{\T} = \frac{P_{\max}^{\T} K}{M^\T}$ and $\bar{P}_{\max}^{\U} = \frac{P_{\max}^{\U} K}{M^\U}$
        \EndIf
        \State Solve problem $\pr_5$ and obtain $\w_{b,k}^{(i)}$, $\u_{r,k}^{(i)}$, $\alpha_{b,k}^{n,(i)}$, $\beta_{r,k}^{n,(i)}, \ \forall b,r,k$
        \State Update $\gamma_{k}^{\T,(i)}$ using \eqref{eqn:THz-SINR-FP}
        \State Update $\gamma_{k}^{\U,(i)}$ using \eqref{eqn:MB-SINR-FP}
        \State \textbf{Until} $|U_1^{(i)} - U_1^{(i-1)}| < \epsilon_1$ \& $|(\mathcal{L}_1^{(i)}+\mathcal{L}_2^{(i)}) - (\mathcal{L}_1^{(i-1)}+\mathcal{L}_2^{(i-1)})| < \epsilon_2$
        \EndFor
    \end{algorithmic}
    \textbf{Output}: The optimal resource allocation at point $n$: $\w_{b,k}^{o},\F_b^{o},\u_{r,k}^{o},\Q_r^{o},\alpha_{b,k}^{n,o},\beta_{r,k}^{n,o}, \ \forall b,r,k$.
\end{algorithm}

\subsection{\textcolor{black}{Convergence and Complexity Analysis of Algorithm 1}}
\subsubsection{\textcolor{black}{Convergence}}
\textcolor{black}{
To show the convergence of the iterative part of \textbf{Algorithm 1}, from Step 7 to Step 19, we first denote the variables of $\pr_4$ by $\boldsymbol{x} = {\boldsymbol{w}_{b,k}, \boldsymbol{u}_{r,k}}, \forall b, r, k$, and $\boldsymbol{\alpha} = {\alpha_{b,k}^n, \beta_{r,k}^n}, \forall b, r, k$. Let $\mathcal{X}$ denotes the feasible set, and define the objective function as $f(\boldsymbol{x}, \boldsymbol{\alpha}) = U_1(\boldsymbol{x}) + P(\boldsymbol{\alpha})$, where $U_1$ is the sum-rate before applying the quadratic transformation, and $P(\boldsymbol{\alpha}) = -\Gamma_1 \mathcal{Z}_1 - \Gamma_2 \mathcal{Z}_2 \leq 0$, for all $\boldsymbol{\alpha} \in [0,1]$. At the $i$-th iteration, the sum-rate is lower-bounded by $U_1^{(i)}(\boldsymbol{x}) \leq U_1(\boldsymbol{x})$ via the quadratic transformation (see \eqref{Lem1-Eq1}), such that $U_1^{(i)}(\boldsymbol{x}^{(i-1)}) = U_1(\boldsymbol{x}^{(i-1)})$. The penalty term, being a convex function, is lower-bounded by its first-order Taylor expansion as $P^{(i)}(\boldsymbol{\alpha}) \leq P(\boldsymbol{\alpha}) \leq 0$, where $P^{(i)}(\boldsymbol{\alpha}^{(i-1)}) = P(\boldsymbol{\alpha}^{(i-1)})$. Accordingly, the surrogate objective function at iteration $i$ is defined as: $f^{(i)}(\boldsymbol{x},\boldsymbol{\alpha}) = U_1^{(i)}(\boldsymbol{x}) + P^{(i)}(\boldsymbol{\alpha})$, which forms the objective of $\pr_5$.
}

\textcolor{black}{
The solution at the $i$-th iteration is given by: $\boldsymbol{x}^{(i)},\boldsymbol{\alpha}^{(i)}= \arg\max_{\boldsymbol{x},\boldsymbol{\alpha} \in \mathcal{X}} f^{(i)}(\boldsymbol{x},\boldsymbol{\alpha})$. Since $\pr_5$ is a convex optimization problem, it satisfies the inequality:}
{\color{black}
\begin{align}
    f(\boldsymbol{x}^{(i)},\boldsymbol{\alpha}^{(i)}) \geq f^{(i)}(\boldsymbol{x}^{(i)},\boldsymbol{\alpha}^{(i)}) & \geq f^{(i)}(\boldsymbol{x}^{(i-1)},\boldsymbol{\alpha}^{(i-1)}) \notag \\ & = f(\boldsymbol{x}^{(i-1)},\boldsymbol{\alpha}^{(i-1)}),
\end{align}
} \textcolor{black}{
which shows that the objective of $\pr_4$ increases monotonically with iterations.
Furthermore, since all variables are bounded due to power and QoS constraints, the sequence ${f(\boldsymbol{x}^{(i)}, \boldsymbol{\alpha}^{(i)})}$ is finite. Also, as $P(\boldsymbol{\alpha}^{(i)}) \leq 0$, the objective function of $\pr_4$ is bounded above by the sum-rate: $f(\boldsymbol{x}^{(i)},\boldsymbol{\alpha}^{(i)}) \leq U_1(\boldsymbol{x}^{(i)})$. If sufficiently large penalty factors $\Gamma_1$ and $\Gamma_2$ are chosen, then $P(\boldsymbol{\alpha}^{(i)}) \to 0$, and equality holds. Consequently, $U_1(\boldsymbol{x}^{(i)})$ converges to a sub-optimal point of $\pr_2$.}

\subsubsection{\textcolor{black}{Complexity Analysis}}
\textcolor{black}{
The worst-case complexity of \textbf{Algorithm 1} is obtained as follows. In the first stage of the algorithm, the complexity is dominated by the analog beamforming derivations. If fully-connected HBF is selected, the complexity is $\mathcal{O}((B^\T M^\T + B^\U M^\U)K)$, and if partially connected HBF is chosen, the complexity reduces to $\mathcal{O}(B^\T M^\T + B^\U M^\U)$.
For the iterative part of the algorithm, since only one convex optimization problem is solved in each iteration, the algorithm exhibits polynomial computational complexity \cite{razaviyayn2013unified}. Steps 8, 9, 16, and 17 have a combined complexity of $\mathcal{O}(4K)$, which is dominated by the complexity of Step 15. 
Complexity of solving the convex optimization in step 15 is $\mathcal{O}(((K^2 + 2K)(B^\T + B^\U))^3)$, which is of cubic order in terms of the number of variables \cite{ye2011interior}.}\footnote{\textcolor{black}{The MOSEK implementation is based on the homogeneous model, the Nesterov–Todd search direction, and a Mehrotra-like predictor-corrector algorithm \cite{andersen2003implementing}. These features enhance numerical robustness and practical efficiency. Although interior-point methods have cubic theoretical complexity, MOSEK leverages sparsity and structure to reduce this complexity. Based on MOSEK's FLOP logs for our problem, the total computational complexity scales approximately as $\mathcal{O}(((K^2 + 2K)(B^\T + B^\U))^2)$, which is the square of the number of variables.}} \textcolor{black}{ Therefore, the overall complexity of \textbf{Algorithm 1} is $\mathcal{O}((B^\T M^\T + B^\U M^\U)K + I_A(((K^2+2K)(B^\T+B^\U))^3)$ for fully-connected HBF, and $\mathcal{O}(B^\T M^\T + B^\U M^\U + I_A(((K^2+2K)(B^\T+B^\U))^3)$ for partially-connected HBF, where $I_A$ is the number of iterations required for Steps 7–19 to converge. 
}

\normalsize
\section{Mobility-aware User Association and Hybrid Beamforming in Cooperative THz/UMB MBN}

In this section, we address mobility-aware resource allocation in a cooperative MBN. In traditional single-band multi-BS networks,  HO occurs whenever a user switches from one BS to another. Each HO reduces effective data rate by decreasing the time available for data transmission. In cooperative networks, the number of HOs can be higher since users have more BSs to associate with simultaneously. However, the data rate reduction may not necessarily happen due to cooperation gain of the BSs' cluster. In this section, we develop two approaches to handle HO-aware resource allocation in order to mitigate the impact of HOs on cooperative MBN. 

First, we define the HO-aware effective sum-rate, which serves as the key performance metric that accounts for mobility and HO cost. 
\textcolor{black}{
We assume that a HO occurs at trajectory point $n$ whenever a user associates with a BS at point $n$ that it was not associated with at the previous trajectory point $n-1$. For the $k$-th user, the total number of HOs at point $n$ is given by $\varsigma_k^\T = \sum\limits_{b=1}^{B^\T} \left(1 - \alpha_{b,k}^{n-1}\right) \alpha_{b,k}^{n}$ for the THz band, and by $\varsigma_k^\U = \sum\limits_{r=1}^{B^\U} \left(1 - \beta_{r,k}^{n-1}\right) \beta_{r,k}^{n}$ for the UMB band.
Each HO results in a transmission interruption by reducing the available time for data transmission, which in turn decreases the effective rate at point $n$ \cite{gupta2024forecaster,sun2020movement,HoCost-1,MBN-Mobility-2}.
This interruption is modeled by cost factors $\eta^\T \in (0,1)$ and $\eta^\U \in (0,1)$ for the THz and UMB bands, respectively. These cost factors represent the fraction of transmission time lost due to HO delays, and their values may vary depending on the specific HO procedures implemented by the operator  \cite{gupta2024forecaster}.
Therefore, the remaining time available for transmission over THz and UMB frequency bands is $1 - \eta^\T \varsigma_k^\T$ and $1 - \eta^\U \varsigma_k^\U$, respectively. Based on this, the effective rate of user $k$ at point $n$ is given by $\bar{R}_k = \bar{R}_k^\T + \bar{R}_k^\U$, where
}
\begin{equation}\label{eqn:THz-HO-Rate}
    \hspace{-2mm}\bar{R}_k^{\T} \!=\! \Bigg[\!\!\left(\!1-\eta^\T \sum\limits_{b=1}^{B^\T}\!\left(\!1-\alpha_{b,k}^{n-1}\!\right)\!\alpha_{b,k}^{n}\!\right)\!
    \omega^\T\!\log_2(1+\gamma_k^\T)\Bigg]^+
\end{equation}
and \vspace{-2mm}
\begin{equation}\label{eqn:RF-HO-Rate}
    \hspace{-2mm}\bar{R}_k^{\U} \!=\! \Bigg[\!\!\left(\!1- \eta^\U\sum\limits_{r=1}^{B^\U}\!\left(\!1-\beta_{r,k}^{n-1}\!\right)\!\beta_{r,k}^{n}\!\right)\!\omega^\U\! \log_2(1+\gamma^\U_k)\Bigg]^+,
\end{equation}
where $[x]^+ = \max\{0,x\}$.
The HO-aware system sum-rate is then formulated as follows:
\vspace{-0.1mm}
\begin{equation}\label{eqn:HO-aware-SumRate}
    U_2 = \sum\limits_{k=1}^K \bar{R}_k = \sum\limits_{k=1}^K \bar{R}_k^{\T} + \bar{R}_k^{\U}.
\end{equation}

\subsection{HO Cost-based Approach}
In this approach, we incorporate the direct impact of HO on the system sum-rate by setting the HO-aware sum-rate as the objective function. Hence, the HO cost-based resource allocation problem is formulated as follows: 
\begin{align}
\label{prob-hat1} &\hat{\pr}_1:
\underset{\substack{\alpha^n_{b,k},\w_{b,k},\F_b \\ \beta^n_{r,k},\u_{r,k},\Q_r}}{\textrm{maximize}} \,\, \,\,U_2  \\
&\mbox{s.t.}\hspace*{4mm}
\mathbf{{C}_{1}},\mathbf{{C}_{2}}, \mathbf{{C}_{3}}, \mathbf{{C}_{4}}, \mathbf{{C}_{5}}, \mathbf{{C}_{6}}, \mathbf{{C}_{7}}, \mathbf{{C}_{8}}, \mathbf{{C}_{9}}, \mathbf{{\bar{C}}_{10}}: \bar{R}_k \geq R_k^{\mathrm{Th}},\notag \\
&\mathbf{{C}_{11}},\mathbf{{C}_{12}}: \sum\limits_{b=1}^{B^{\T}} \alpha_{b,k}^{n-1}\alpha_{b,k}^{n} + \sum\limits_{r=1}^{B^{\U}} \beta_{r,k}^{n-1}\beta_{r,k}^{n} \geq S^{\mathrm{HO}},\notag
\end{align}
where $\mathbf{{\bar{C}}_{10}}$ denotes the HO-aware QoS constraint. In addition, we added constraint $\mathbf{C_{12}}$ to ensure that each user maintains a connection to at least $S^{\mathrm{HO}}$ BSs from the previous point.
On top of the challenges identified in Section III.A for solving $\pr_1$, such as non-convexity and the presence of integer variables, we encounter an additional obstacle in solving $\hat{\pr}_1$. This arises from the multiplication of binary variables by achievable rates, specifically in the terms associated with HO costs. It is worth noting that although the binary variables can be decoupled from the beamformers as demonstrated in Section III.C, this approach does not allow for the decoupling of binary variables in the HO cost terms in $U_2$. \textcolor{black}{Therefore, \textbf{Algorithm 1} cannot be directly employed to solve $\hat{\pr}_1$.}

In order to solve $\hat{\pr}_1$, after obtaining the analog beamformer as in Section III.B and transforming the problem as in Section III.C, we apply quadratic programming and MM approach  at $i$-th iteration of the algorithm resulting in the following: 
\begin{align}
\label{prob-hat2} &\hat{\pr}_2:
\underset{\substack{\alpha^n_{b,k},\w_{b,k},p_{b,k} \\ \beta^n_{r,k},\u_{r,k}, \delta_{r,k}}}{\textrm{maximize}} \,\, \,\,U_2^{(i)} - \Gamma_1 \mathcal{L}_1^{(i)} - \Gamma_2 \mathcal{L}_2^{(i)}\\
&\mbox{s.t.}\hspace*{4mm}
\mathbf{{C}_{1}},\mathbf{{C}_{2}}, \mathbf{{C}_{5.1}}, \mathbf{{C}_{5.2}}, \mathbf{{C}_{5.3}}, \mathbf{{C}_{6.1}}, \mathbf{{C}_{6.2}}, \mathbf{{C}_{6.3}}, \notag \\
&\mathbf{{C}_{7}},\mathbf{{\bar{C}}_{10}}, \mathbf{{C}_{11.1}},\mathbf{{C}_{11.3}},\mathbf{{C}_{12}}\notag,
\end{align}
where $U_2^{(i)}$ is the HO-aware sum-rate utility at $i$-th iteration of the Algorithm. \textcolor{black}{$\hat{\pr}2$ remains a non-convex problem due to the multiplication of variables in both the objective function and constraint $\mathbf{{\bar{C}}{10}}$.} 

\begin{lemma}\label{lemma-2}
After applying the quadratic transformations in \eqref{eqn:THz-SINR-FP} and \eqref{eqn:MB-SINR-FP}, each HO-aware rate expression, i.e., $\bar{R}_k^{\T}$ for $\mathbf{C_{13}}:0\leq \sum\limits_{b=1}^{B^\T}\left(1-\alpha_{b,k}^{n-1}\right)\alpha_{b,k}^{n} < \frac{1}{\eta^\T}$ and $\bar{R}_k^{\U}$ for $ \mathbf{C_{14}}:0\leq\sum\limits_{r=1}^{B^\U}\left(1-\beta_{r,k}^{n-1}\right)\beta_{r,k}^{n}<\frac{1}{\eta^\U}$, is a log-concave function.
\end{lemma}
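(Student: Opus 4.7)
The plan is to establish the log-concavity of $\bar{R}_k^{\T}$ (with the UMB case $\bar{R}_k^{\U}$ following by the same argument) by showing that $\log\bar{R}_k^{\T}$ splits into two concave terms on disjoint blocks of variables: an HO-coefficient depending only on $\{\alpha_{b,k}^n\}$ and a transformed-rate factor depending only on the digital beamformers $\{\w\}$. Standard concavity-preserving operations then finish the argument.

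First, I would analyze the quadratic-transformed SINR in \eqref{eqn:THz-SINR-FP}. With $\mu_k$ frozen at iteration $i$, the linear part $2\Re\{\mu_k^*\sum_b\w_{b,k}^H\bar{\F}_b^H\h_{b,k}\}$ is real-affine in $\w$, while the subtracted penalty $\abss{\mu_k}\bigl(\sum_{k'\neq k}\abss{\sum_b\h_{b,k}^H\bar{\F}_b\w_{b,k'}}+\tilde{\sigma}^2_k+N_0\omega^\T\bigr)$ is a convex quadratic in $\w$, since from \eqref{eqn:MolecNoise} the molecular-absorption noise $\tilde{\sigma}^2_k$ is itself a sum of squared magnitudes of linear forms in $\w$. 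Hence $1+\gamma_k^{\T,(i)}$ is concave in $\w$, and the concave-non-decreasing composition rule yields that $\omega^\T\log_2(1+\gamma_k^{\T,(i)})$ is likewise concave and, under the QoS constraint, strictly positive.

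Next, I would isolate the HO-coefficient $c(\alpha):=1-\eta^\T\sum_b(1-\alpha_{b,k}^{n-1})\alpha_{b,k}^n$. Because $\alpha_{b,k}^{n-1}$ is fixed at the previous trajectory point, $c(\alpha)$ is affine in the relaxed association variables $\alpha_{b,k}^n$, and constraint $\mathbf{C_{13}}$ is precisely $c(\alpha)>0$. Taking logarithms gives
\begin{equation*}
\log\bar{R}_k^{\T} = \log c(\alpha) + \log\bigl(\omega^\T\log_2(1+\gamma_k^{\T,(i)})\bigr).
\end{equation*}
The first term is concave as the log of a positive affine map, and the second is concave by one more application of the concave-of-concave composition rule to the positive concave function $\omega^\T\log_2(1+\gamma_k^{\T,(i)})$. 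Because the two summands depend on disjoint variable blocks, their sum is jointly concave in $(\alpha,\w)$, which is exactly log-concavity of $\bar{R}_k^{\T}$ on $\mathbf{C_{13}}$. The UMB case follows by the same argument applied to \eqref{eqn:MB-SINR-FP}, with $\mathbf{C_{13}}$, $\omega^\T$, and $\gamma_k^{\T,(i)}$ replaced by their UMB counterparts and the molecular-absorption term dropped.

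The main technical obstacle I anticipate is domain control for the outer logarithm, which requires $\gamma_k^{\T,(i)}>0$ throughout the iterate sequence so that $\log\bigl(\omega^\T\log_2(1+\gamma_k^{\T,(i)})\bigr)$ is well defined. This can be justified by noting that at the stationary update \eqref{eqn:THz-QuadAux} the quadratic surrogate coincides with the nonnegative true SINR, so strict positivity propagates across iterations starting from any feasible interior initialization satisfying a strictly positive QoS threshold $R_k^{\mathrm{Th}}$. Once this positivity is in hand, every remaining step is a routine application of the affine-into-log, concave-into-log, and block-separable addition rules for concavity.
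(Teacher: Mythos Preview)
Your proof is correct, and it takes a cleaner route than the paper's own argument. The paper's proof in Appendix~B abstracts each HO-aware rate as a two-variable scalar surrogate $f(x,y)=(1-y)\log(1+x-x^2)$, with $x$ standing in for the quadratic-transformed SINR and $y$ for the aggregated HO cost, and then verifies log-concavity by explicitly computing the $2\times2$ Hessian of $\log f$ and checking that both diagonal entries are nonpositive on $x\in(0,1)$, $y\in[0,1)$. Your argument instead works directly with the actual multi-dimensional variables: you observe that $1+\gamma_k^{\T,(i)}$ is linear-minus-convex-quadratic in $\w$ (correctly noting that $\tilde{\sigma}_k^2$ from \eqref{eqn:MolecNoise} is a sum of squared linear forms), that $c(\alpha)$ is affine, and that $\log\bar{R}_k^{\T}$ therefore splits as a sum of log-of-positive-affine and log-of-positive-concave, both concave by the standard composition rule. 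This avoids the scalarization step, handles the full beamformer vectors without an implicit reduction, and makes transparent exactly where the positivity hypothesis $\gamma_k^{\T,(i)}>0$ is used. The paper's Hessian computation is more hands-on but leaves the passage from the scalar model to the genuine multi-variable problem unstated; your approach closes that gap. One minor remark: the block-separability observation, while true, is not strictly needed, since a sum of concave functions on the joint space is concave regardless.
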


\begin{proof}
    See \textbf{Appendix B}.
\end{proof}

In \textbf{Lemma}~\ref{lemma-2}, we showed that each HO-aware rate is a log-concave function. However, since the sum of log-concave functions is not necessarily log-concave \cite{boyd2004convex}, the objective function $U_2^{(i)}$ and QoS constraints (sum over THz and UMB rates per user) are still non-convex. First, in order to handle the objective function, we propose the following lemma:
\begin{lemma}\label{lemma-3}
    The objective $U_2^{(i)}$ can be replaced with the following concave lower bound:
        $\tilde{U}_2^{(i)} = \frac{1}{2K} \sum\limits_{k=1}^K \log(\bar{R}^\T_k)+\log(\bar{R}^\U_k)+1$.
\end{lemma}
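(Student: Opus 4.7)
The plan is to replace $U_2^{(i)}$ --- a sum of terms that are each log-concave (by Lemma~\ref{lemma-2}) but whose sum is not in general log-concave --- by a concave surrogate obtained from the elementary scalar inequality $x \geq 1 + \log x$, valid for every $x > 0$ (equivalently, $\log x$ lies below its tangent line at $x=1$). Applying this inequality term-by-term to each of the $2K$ positive rates $\bar{R}_k^{\T}$ and $\bar{R}_k^{\U}$ and summing gives
\begin{equation*}
U_2^{(i)} \;=\; \sum_{k=1}^K \bigl(\bar{R}_k^{\T} + \bar{R}_k^{\U}\bigr) \;\geq\; 2K + \sum_{k=1}^K \bigl(\log \bar{R}_k^{\T} + \log \bar{R}_k^{\U}\bigr).
\end{equation*}
Dividing both sides by the positive constant $2K$, which preserves ordering and the maximizer, yields $U_2^{(i)}/(2K) \geq \tilde{U}_2^{(i)}$, exactly the stated lower bound.

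Next, I would verify that $\tilde{U}_2^{(i)}$ is concave. By Lemma~\ref{lemma-2}, after substituting the quadratic transformations in \eqref{eqn:THz-SINR-FP} and \eqref{eqn:MB-SINR-FP}, each $\bar{R}_k^{\T}$ is log-concave on the region carved out by $\mathbf{C_{13}}$ and each $\bar{R}_k^{\U}$ is log-concave under $\mathbf{C_{14}}$. By definition, this means $\log \bar{R}_k^{\T}$ and $\log \bar{R}_k^{\U}$ are concave functions of the optimization variables. Hence $\tilde{U}_2^{(i)}$ is a nonnegative linear combination of $2K$ concave functions shifted by a constant, which is concave. Well-posedness of the logs is inherited from $\mathbf{\bar{C}_{10}}$ with $R_k^{\mathrm{Th}} > 0$ and from $\mathbf{C_{13}}, \mathbf{C_{14}}$, which together keep the HO discount factors strictly positive, so the $[\cdot]^+$ clipping is inactive on the feasible set and the smooth expressions of Lemma~\ref{lemma-2} apply verbatim.

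I would close by justifying the choice of this particular bound rather than, e.g., Jensen's inequality $\log(\sum_i a_i) \geq \log(n) + \tfrac{1}{n}\sum_i \log a_i$: the present form is \emph{additive} across users and bands, which is crucial for preserving the separable structure of the power-budget couplings, the big-M decoupling constraints, and the MM penalties already developed for $\pr_5$, so that the HO-aware subproblem $\hat{\pr}_2$ inherits the convex machinery of Section~III with the objective being the only term that changes. The main --- and really the only --- nontrivial ingredient is the appeal to Lemma~\ref{lemma-2}, which carries the weight of establishing log-concavity of the HO-discounted rates after the quadratic transform; once that is granted, the remainder of the proof reduces to a one-line scalar inequality followed by the stability of concavity under nonnegative linear combination.
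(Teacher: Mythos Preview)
Your proof is correct and reaches the same conclusion as the paper, but through a different (and more direct) chain of inequalities. The paper first applies Jensen's inequality to the concave logarithm to collapse the $2K$ logs into a single $\log\bigl(\tfrac{1}{2K}\sum_k \bar{R}_k\bigr)$, then invokes $\log x \leq x-1$ once, and finally uses the trivial bound $\tfrac{1}{2K}\sum_k \bar{R}_k \leq \sum_k \bar{R}_k$ to conclude $\tilde{U}_2^{(i)} \leq U_2^{(i)}$. You instead apply $x \geq 1+\log x$ termwise to each of the $2K$ rates, which yields $U_2^{(i)} \geq 2K\,\tilde{U}_2^{(i)}$, i.e., $\tilde{U}_2^{(i)} \leq U_2^{(i)}/(2K)$ --- a different inequality from the paper's $\tilde{U}_2^{(i)} \leq U_2^{(i)}$ (and sharper whenever $\tilde{U}_2^{(i)}>0$), but, as you correctly note, equivalent for the purpose of selecting a surrogate objective since the maximizer is unaffected by positive scaling. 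Your route is arguably cleaner in that it dispenses with the Jensen step entirely; the paper's route has the cosmetic advantage of bounding $U_2^{(i)}$ itself rather than a rescaling of it. The concavity argument and the reliance on Lemma~\ref{lemma-2} are identical in both proofs.
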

\vspace{-3mm}
\begin{proof} The lower bound of $U_2^{(i)}$ can be obtained as follows: 
\vspace{-2mm}
    \begin{align}
    \tilde{U}_2^{(i)} &= \frac{1}{2K} \!\sum\limits_{k=1}^K \log(\bar{R}^\T_k)\!+\!\log(\bar{R}^\U_k)+\!1\! \overset{(a)}{\leq} \log(\sum\limits_{k=1}^K \frac{1}{2K} \bar{R}_k)\! +\! 1 \notag \\ & \overset{(b)}{\leq} \frac{1}{2K} \sum\limits_{k=1}^K \bar{R}_k \overset{(c)}{\leq} \sum\limits_{k=1}^K \bar{R}_k = U_2^{(i)}.
\end{align}
Following Jensen's inequality for concave function $f(x)$, we have:
\begin{equation}\label{eqn:Jensen-Ineq}
    \lambda_1 f(x_1)+\dots+\lambda_K f(x_K) \leq f(\lambda_1 x_1 +\dots+\lambda_K x_K).
\end{equation}
By setting $\lambda_k = \frac{1}{2K} \ \forall k$ and having $x_k = \bar{R}_k$, inequality $(a)$ can be obtained. Moreover, the inequality in $(b)$ is justified using the property of the logarithm function: $x-1 \geq \log(x), \ \forall x>0$, and $(c)$ is trivial. $\tilde{U}_2^{(i)}$ is also a concave function since $\bar{R}^\T_k$ and $\bar{R}^\U_k,  \forall \ k$ are log-concave functions as proved in \textbf{Lemma 2}. 
\end{proof}
Next, we need to tackle the non-convexity of the HO-aware QoS constraints in $\mathbf{{\bar{C}}_{10}}$. This constraint can be written as:
\begin{align}
   \log\left(\frac{1}{2}R_k^\T + \frac{1}{2}R_k^\U\right) \geq \log\left(\frac{1}{2}R_k^{\mathrm{Th}}\right), \forall \ k.
\end{align}
By seeking Jensen's inequality and setting $\lambda_1 = \lambda_2 = \frac{1}{2}$ in \eqref{eqn:Jensen-Ineq}, we have:
\begin{align}\label{eqn:HO-aware-QoS-App}
    &\log(\frac{1}{2}\bar{R}_k) = \log\left(\frac{1}{2}R_k^\T + \frac{1}{2}R_k^\U\right) \geq \tilde{R}_k\notag \\ & =\frac{1}{2}\log(R_k^\T) + \frac{1}{2}\log(R_k^\U) \geq \log\left(\frac{1}{2}R_k^{\mathrm{Th}}\right)=\tilde{R}_k^{\mathrm{Th}}, \forall \ k. \notag
\end{align}

After the above-mentioned discussions, problem $\hat{\pr}_2$ is approximated with the following convex optimization at the $i$-th iteration of the Algorithm:
\begin{align}
 &\hat{\pr}_3:
\underset{\substack{\alpha^n_{b,k},\w_{b,k},p_{b,k} \\ \beta^n_{r,k},\u_{r,k}, \delta_{r,k}}}{\textrm{maximize}} \,\, \,\,\tilde{U}_2^{(i)} - \Gamma_1 \mathcal{L}_1^{(i)} - \Gamma_2 \mathcal{L}_2^{(i)}\\
&\mbox{s.t.}\hspace*{4mm}
\mathbf{{C}_{1}},\mathbf{{C}_{2}}, \mathbf{{C}_{5.1}}, \mathbf{{C}_{5.2}}, \mathbf{{C}_{5.3}}, \mathbf{{C}_{6.1}}, \mathbf{{C}_{6.2}}, \mathbf{{C}_{6.3}}, \notag \\
&\mathbf{{C}_{7}},\mathbf{{\tilde{C}}_{10}}:\tilde{R}_k \geq \tilde{R}_k^{\mathrm{Th}}, \mathbf{{C}_{11.1}},\mathbf{{C}_{11.3}},\mathbf{{C}_{12}},\mathbf{{C}_{13}},\mathbf{{C}_{14}}, \notag
\end{align}
where $\mathbf{{C}_{13}}$ and $\mathbf{{C}_{14}}$ are included to ensure the conditions of \textbf{Lemma 2}. $\hat{\pr}_3$ is a convex optimization problem that CVX can solve efficiently. Therefore, by properly adjusting \textbf{Algorithm~1}, we can solve $\hat{\pr}_1$, while the convex optimization solved inside the algorithm is $\hat{\pr}_3$.

\subsection{MOOP-based Approach}
In the second HO-aware resource allocation approach, we utilize the same system sum-rate formulation used in $U_1$ and solve the following MOOP problem, which maximizes the system sum-rate \textcolor{black}{as the first objective and minimizes the total number of HOs as the second objective.}
\begin{align}
&\label{prob-bar1} \bar{\pr}_1:
\underset{\substack{\alpha^n_{b,k},\w_{b,k},\F_b \\ \beta^n_{r,k},\u_{r,k},\Q_r}}{\textrm{maximize}} \,\, \,\,U_1 \ \ , \ \ \ \underset{\alpha^n_{b,k},\beta^n_{r,k}}{\textrm{minimize}} \,\, \,\,U_{\mathrm{HO}} \\
&\mbox{s.t.}\hspace*{4mm}
\mathbf{{C}_{1}},\mathbf{{C}_{2}}, \mathbf{{C}_{3}}, \mathbf{{C}_{4}}, \mathbf{{C}_{5}},\mathbf{{C}_{6}}, \mathbf{{C}_{7}}, \mathbf{{C}_{8}}, \mathbf{{C}_{9}}, \mathbf{{{C}}_{10}},
\mathbf{{C}_{11}},\mathbf{{C}_{12}},\notag
\end{align}
where $U_{\mathrm{HO}}$ is the total number of HOs and is given by:
\begin{equation}\label{eqn:NumHOs}
    U_{\mathrm{HO}} = \sum\limits_{k=1}^K \left(\sum\limits_{b=1}^{B^\T}(1-\alpha_{b,k}^{n-1})\alpha_{b,k}^{n} + \sum\limits_{r=1}^{B^\U}(1-\beta_{r,k}^{n-1})\beta_{r,k}^{n}\right).
\end{equation}
To solve $\bar{\pr}_1$, the method of \textit{weighted sum method} is utilized to transform both objective functions into a single one \cite{marler2010weighted}. In this approach, the second objective function $U_{\mathrm{HO}}$ is added to the first objective function with a weight denoted by $\Xi \in \mathbb{R^+}$. Therefore, $\bar{\pr}_1$ is transformed into the following problem:
\begin{align}
&\label{prob-bar2} \bar{\pr}_2:
\underset{\substack{\alpha^n_{b,k},\w_{b,k},\F_b \\ \beta^n_{r,k},\u_{r,k},\Q_r}}{\textrm{maximize}} \,\, \,\,U_1 - \Xi  U_{\mathrm{HO}} \\
&\mbox{s.t.}\hspace*{4mm}
\mathbf{{C}_{1}},\mathbf{{C}_{2}}, \mathbf{{C}_{3}}, \mathbf{{C}_{4}}, \mathbf{{C}_{5}},\mathbf{{C}_{6}}, \mathbf{{C}_{7}}, \mathbf{{C}_{8}}, \mathbf{{C}_{9}}, \mathbf{{{C}}_{10}},
\mathbf{{C}_{11}},\mathbf{{C}_{12}} ,\notag
\end{align}
Problem $\bar{\pr}_2$ can be solved directly using \textbf{Algorithm 1}, which includes obtaining the analog beamformer using \eqref{Analog_SubProb} and the subsequent transformations. 

\section{Numerical Results and Discussions}

In this section, we evaluate the proposed MBN performance and resource allocation methods. 

\subsubsection{Simulation Set-up and Parameters}
We consider a corridor with a length of $D = 350$ [m] and width of 250 [m], where BSs are deployed on both sides uniformly. The minimum vertical distance from the BSs is 30 [m]. Unless otherwise stated, the parameters used in the simulations are as follows: The carrier frequency of THz, $f_\T = 0.4$ [THz], and its bandwidth is $\omega^\T = 0.8$ [GHz]. The carrier frequency at UMB is $f_\U=8$ [GHz] and the bandwidth is set to be $\omega^\U = 100$ [MHz]. The antenna gains are $G_{\mathrm{Tx}}^\T = 15$ [dB], $G_{\mathrm{Rx}}^\T = 8$ [dB], $G_{\mathrm{Tx}}^\U = 10$ [dB], and $G_{\mathrm{Rx}}^\U = 8$ [dB]. Power budgets are $P_{\max}^\T = 25$ [dBm] and $P_{\max}^\U = 40$ [dBm]. Cluster sizes are set as $\mathcal{C}_k^\T=2$ and $\mathcal{C}_k^\U=2$. For scenarios where BSs operate on only one frequency band, the cluster size is set to $\mathcal{C}_k = \mathcal{C}_k^\T + \mathcal{C}_k^\U = 4$ for fairness in comparison. 

\subsubsection{Benchmarks}
The proposed MBN architecture is compared against the \textit{THzOn} scenario, where no UBS is deployed in the corridor, and for a fair comparison, the total number of BSs is the same in both cases. We have \textit{MBN-Algo1}, which denotes the proposed MBN architecture as well as resource allocation in \textbf{Algorithm~1}. The notation FC is used for the fully-connected HBF case, while PC is used for the partially-connected HBF. Meanwhile, \textit{THzOn-Algo1} represents the deployment of only TBSs and using \textbf{Algorithm~1}. \textcolor{black}{\textit{MBN-B1} refers to the case where, in the proposed MBN, the $\mathcal{C}_k^\T$ nearest TBSs and $\mathcal{C}_k^\U$ nearest UBSs with the strongest channel gains are selected for the $k$-th user's association in the THz and UMB bands, respectively. The proposed beamforming method in \textbf{Algorithm 1} is then applied. The same approach is used in \textit{THzOn-B1}, where only TBSs are deployed, and the $\mathcal{C}_k = \mathcal{C}_k^\T + \mathcal{C}_k^\U$ nearest TBSs are selected for user association. Moreover, \textit{MBN-ZF} employs the same nearest user–BS association strategy, but uses regularized zero-forcing (RZF) for beamforming \cite{interdonato2020local}.}
\textcolor{black}{In this method, the digital beamformers at the $b$-th TBS and $r$-th UBS are given by:
    $\bar{\boldsymbol{W}}_b = \boldsymbol{H}_b^H\left(\boldsymbol{H}_b \boldsymbol{H}_b^H + e^\T_b \boldsymbol{I} \right)^{-1}$,
    $\bar{\boldsymbol{U}}_r = \boldsymbol{G}_r^H\left(\boldsymbol{G}_r \boldsymbol{G}_r^H + e^\U_r \boldsymbol{I} \right)^{-1}$,
where $\boldsymbol{H}_b = \boldsymbol{F}_b^H [\alpha_{b,1}^n\h_{b,1},\dots,\alpha_{b,K}^n\h_{b,K}]$, with $e_b^\T >0$ used to regularize $\boldsymbol{H}_b \boldsymbol{H}_b^H$, and $\boldsymbol{G}_r = \boldsymbol{Q}_r^H [\beta_{r,1}^n\g_{r,1},\dots,\beta_{r,K}^n\g_{r,K}]$, with $e_r^\U >0$ used to regularize $\boldsymbol{G}_r \boldsymbol{G}_r^H$. To satisfy the power budget constraints, the final beamformers are normalized as follows: ${\boldsymbol{W}}_b = \bar{\boldsymbol{W}}_b \sqrt{\frac{\bar{P}_{\max}^\T}{\sum_{k=1}^{K}\alpha_{b,k}^n\mathrm{tr}(\bar{\boldsymbol{W}}_b \bar{\boldsymbol{W}}^H_b)}}$, and ${\boldsymbol{U}}_r = \bar{\boldsymbol{U}}_r \sqrt{\frac{\bar{P}_{\max}^\U}{\sum_{k=1}^{K}\beta_{r,k}^n \mathrm{tr}(\bar{\boldsymbol{U}}_r \bar{\boldsymbol{U}}^H_r)}}$. A similar approach is also taken for \textit{THzOn-ZF}.
}

\subsubsection{\textcolor{black}{Convergence of Algorithm 1}}
\begin{figure}
    \centering
\includegraphics[scale=0.5]{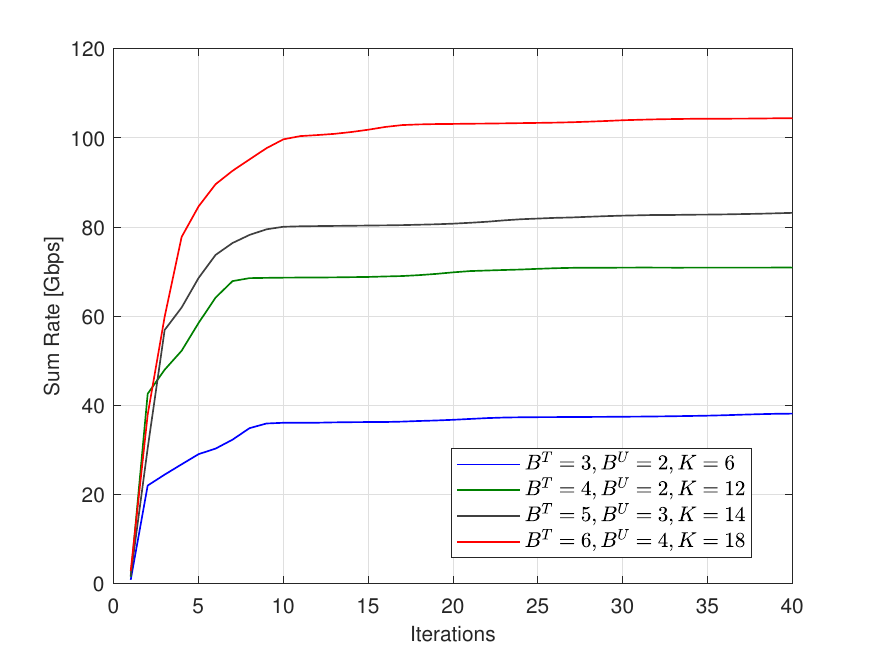}
    \caption{\textcolor{black}{Convergence of Algorithm 1 for different system parameters}}
    \label{fig:Convergence}
\end{figure}

\textcolor{black}{Figure \ref{fig:Convergence} illustrates the convergence behavior of the proposed algorithm under different system parameters, including the number of users and BSs. It can be observed that \textbf{Algorithm 1} converges, as mathematically shown in Section III.E.1. As the number of variables increases, more iterations are required for convergence due to the expanded search space and increased complexity of the optimization problem in each iteration.
}

\subsection{Numerical Results for Stationary Users}
\subsubsection{Impact of Molecular Absorption Coefficient}
\begin{figure}
    \centering
\includegraphics[scale=0.5]{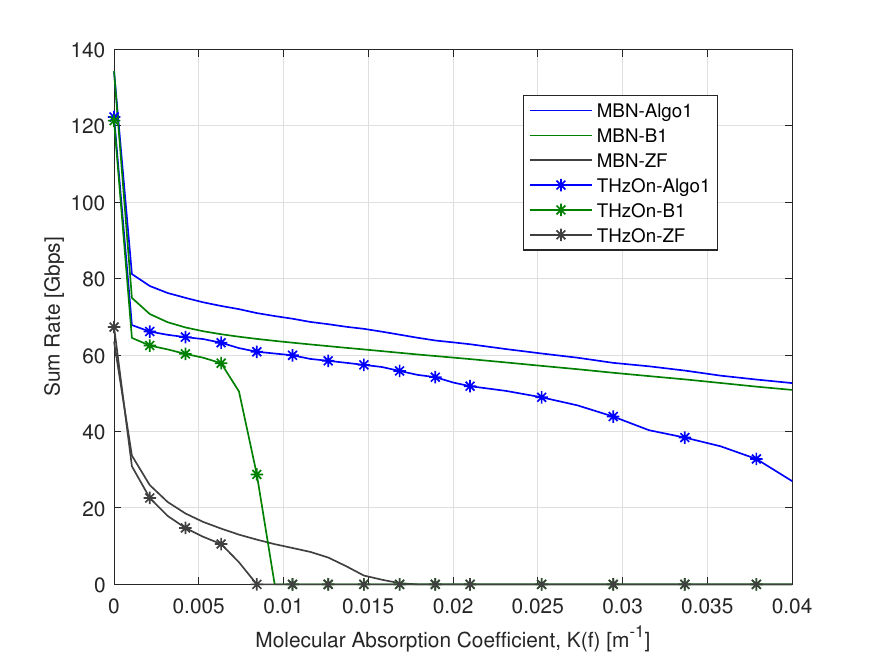}
    \caption{Network  sum-rate at different molecular absorption coefficients, $R^{\mathrm{Th}} = 0.5$ [Gbps], $K=12$, $B^\T = 4$, $B^\U = 2$, $B^\T_{\mathrm{THzOn}} = 6$, $M^\T=504$, $M^\U=84$}
    \label{fig:AbsorbCoef}
\end{figure}
Different environmental conditions, including humidity, temperature, and gas molecule ratios, can impact the molecular absorption coefficient, thereby affecting transmission performance over the THz frequency band. Varying the molecular absorption coefficients for a given carrier frequency can provide insights related to different environmental conditions \cite{MBN-Survey,MBN-Magazine}.
Figure~\ref{fig:AbsorbCoef} illustrates the impact of increasing molecular absorption coefficients on the proposed MBN network, THzOn, and their respective benchmarks. As shown, the performance of all approaches declines with increasing $\mathcal{K}(f_\T)$. \textcolor{black}{This is because the molecular absorption loss in the direct channels $\h_{b,k}$ increases significantly, and the molecular absorption noise further degrades the SINR in the THz band.} The sum-rate for the proposed MBN network decreases at a slower rate, outperforming the scenario where only THz BSs are deployed. Also, the reason MBN-B1 approaches the performance of MBN-Algo1 is that a higher absorption coefficient leads to greater absorption loss, which increases with distance. When the molecular absorption loss becomes dominant, the optimal user association determined by Algorithm 1 converges to the $\mathcal{C}_k$ nearest BSs association to user $k$. 
Additionally, the proposed algorithm exhibits better performance compared to other benchmarks. Specifically, the sum-rate of THzOn-B1 rapidly approaches zero, while the proposed method maintains superior performance. Moreover, RZF in the MBN also outperforms THzOn-ZF, though both approaches see their sum-rates decline more quickly compared to the proposed Algorithm 1.

\subsubsection{Impact of Blockage Density}
\begin{figure}
    \centering
\includegraphics[scale=0.5]{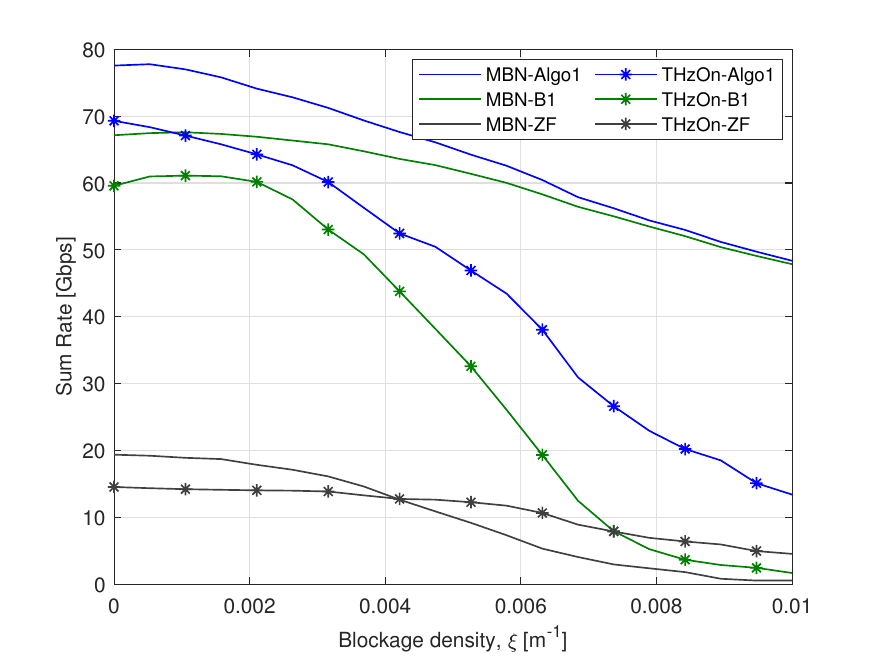}
    \caption{System sum-rate versus density of blockers, $\xi$. $R^{\mathrm{Th}} = 0.5$ [Gbps], $K=12$, $B^\T = 4$, $B^\U = 2$, $B^\T_{\mathrm{THzOn}} = 6$, $M^\T=504$, $M^\U=84$}
    \label{fig:Blockage}
\end{figure}

Figure \ref{fig:Blockage} illustrates the impact of increasing the density of blockers. As shown, the proposed MBN outperforms THzOn as it can benefit from the availability of UMB frequency as blockage density increases. Additionally, the proposed \textbf{Algorithm~1} outperforms other benchmarks. At high blockage densities, the performance of MBN-B1 converges to that of the proposed algorithm because the optimal association in MBN-Algo1 tends to favor non-blocked links, which are typically the nearest BSs to the user, as shown in \eqref{eqn:Blockage-Dist}.
Also, after a certain point, THzOn-ZF outperforms MBN-ZF, which is attributed to the non-optimal resource allocation in this scheme.

\subsubsection{Significance of MBN Deployment with Fully-Connected and Partially-Connected HBF}
\begin{figure}
    \centering
\includegraphics[scale=0.52]{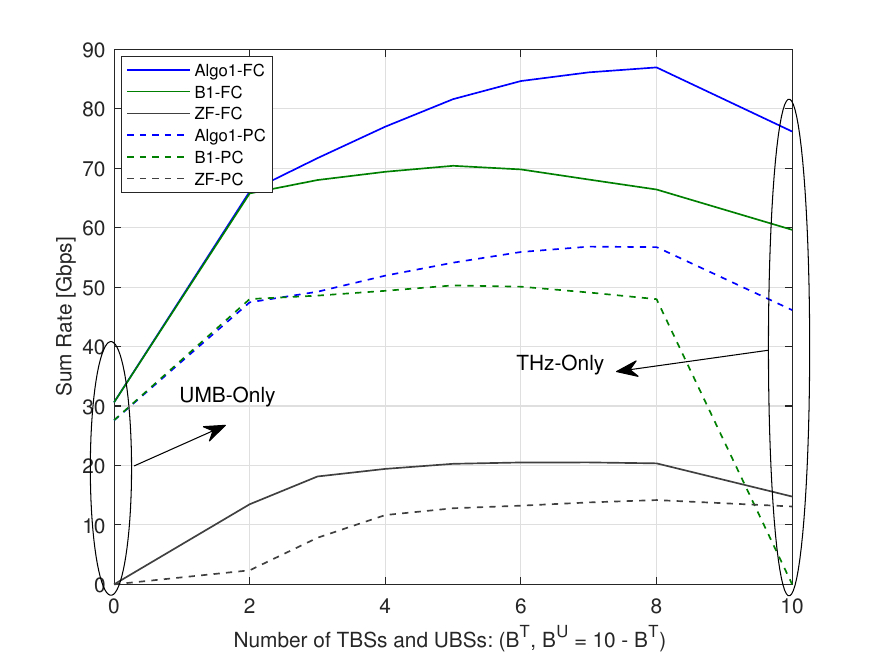}
    \caption{System sum-rate versus number of TBSs and UBSs. The total number of BSs is 10 and increasing TBSs decreases UBSs as $B^\U$ varies within the range of $[10,0]$. $R^{\mathrm{Th}} = 0.5$ [Gbps], $K=12$, $M^\T=504$, $M^\U=84$.}
    \label{fig:NumBSs-Varies}
\end{figure}

Figure \ref{fig:NumBSs-Varies} illustrates the impact of deploying different types of BSs. We assume a total of 10 BSs, and as the number of TBSs on the x-axis increases, the number of UBSs decreases accordingly. For example, when the x-axis value is $B^\T = 6$, the number of UBSs is $B^\U = 4$. 
\textcolor{black}{In this figure, all of the channel characteristics, including blockage with $\xi = 0.002$, are considered in the transitioning from UMB-only to MBN and then to THz-only network.}
It is evident that the MBN can outperform both UMB-only and THz-Only setups for a predefined deployment ratio. The figure also demonstrates the superiority of the proposed algorithm over other benchmarks.
Moreover, in both the proposed Algorithm~1 and Benchmark~B1, where beamforming is optimized, FC HBF consistently outperforms PC HBF due to utilizing more phase shifters at the cost of higher energy consumption. However, it can also be observed that even with PC HBF, the MBN still outperforms the single-band counterparts.

\subsubsection{Number of Antennas and NFC vs FFC}
\begin{figure}
    \centering
\includegraphics[scale=0.5]{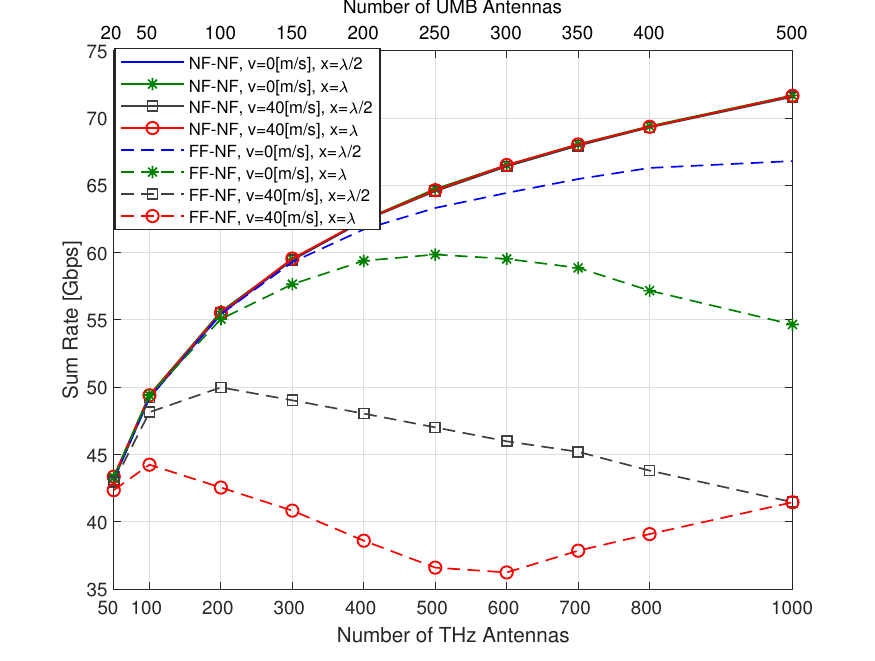}
    \caption{\textcolor{black}{System sum-rate versus the different number of antennas of UMB and THz bands for various velocities of users and antenna spacing. $R^{\mathrm{Th}} = 0.1$ [Gbps], $K=8$, $B^\T = 4$, $B^\U = 2$, $\mathcal{C}_k^\T=4$, $\mathcal{C}_k^\U=2$}}
    \label{fig:NumAnt-Varies}
      \vspace{-2mm}
\end{figure}
\textcolor{black}{Figure \ref{fig:NumAnt-Varies} illustrates the impact of increasing the number of antennas in both frequency bands on the system sum-rate across different user velocities and antenna spacing. 
In the cases labeled `NF-NF,' resource allocation is performed using \textbf{Algorithm-1} based on NFC channel modeling, whereas in `FF-NF,' resource allocation is carried out using FFC channel modeling, with NFC considered as the ground truth.
Increasing both the number of antennas and the antenna spacing expands the array aperture, which in turn increases the Fraunhofer distance. Below this distance, NFC provides a more accurate representation compared to FFC. It can be observed that increasing antenna spacing amplifies the sensitivity to the choice of channel modeling. Furthermore, in the presence of mobility, when the antenna spacing is half a wavelength, i.e., $x = \frac{\lambda}{2}$, increasing user velocity further degrades system performance due to mismatched beamforming under the FFC assumption. However, it can also be observed that when $x = \lambda$, the performance degradation levels off at a certain point, which may be attributed to the energy spread effect \cite{ELAA-NFC-1}. It is also worth noting that employing near-field modeling increases the complexity of channel acquisition in practice, as discussed in \cite{deshpande2022wideband}. This introduces a trade-off between the performance improvement offered by NFC-based beamforming and the additional overhead required for CSI acquisition.}
\subsection{Numerical Results for Mobile Users}
In the simulation of mobile users moving at a speed of 40 [m/s], we consider \textcolor{black}{three consecutive optimization windows (trajectory points), each of length 100 [ms] \cite{gupta2024forecaster}}, and report the average HO-aware sum-rate in \eqref{eqn:HO-aware-SumRate} and the number of HOs in \eqref{eqn:NumHOs}. The remaining parameters are specified in the description of each simulation result.
In these results, \textit{MBN-Algo1-Cost} represents the first proposed HO-aware resource allocation, while \textit{MBN-Algo1-MO} refers to the second HO-aware method. Similar approaches are applied to the THz-Only case. \textcolor{black}{For benchmarks labeled “(No Cost),” resource allocation is performed via \textbf{Algorithm 1} without HO consideration (i.e., HO-agnostic), while their achievable rates are computed using the HO-aware rate expressions in \eqref{eqn:THz-HO-Rate} and \eqref{eqn:RF-HO-Rate}.}

\subsubsection{\textcolor{black}{Impact of Cluster Size}}
\textcolor{black}{
Figure \ref{fig:ClusterSize-Varies} illustrates the trade-off between cluster size and the number of HOs when no HO control is applied. For the THzOn case, we consider two configurations: one with a cluster size of $\mathcal{C}_k = \mathcal{C}_k^\T + \mathcal{C}_k^\U$, as used in the other results of this work, and one with $\mathcal{C}_k = \mathcal{C}_k^\T$. It is evident that increasing the cluster size enhances the system sum-rate due to greater cooperative beamforming capability. On the other hand, the total number of HOs initially increases with cluster size and then decreases. While more HOs can degrade system performance, the improvement from larger cluster sizes can offset this effect.
As shown, when the cluster size increases from $\mathcal{C}_k = 1 + 1$ to $\mathcal{C}_k = 2 + 2$, the green curve for THzOn experiences a performance drop, followed by improvement beyond that point. A similar trend is observed for the black curve of THzOn, where the performance decreases from $\mathcal{C}_k = 3$ to $\mathcal{C}_k = 5$, coinciding with the highest number of HOs.
It is also observed that MBN results in fewer HOs and consequently achieves a higher HO-aware sum-rate.
}
\begin{figure}
    \centering
\includegraphics[scale=0.5]{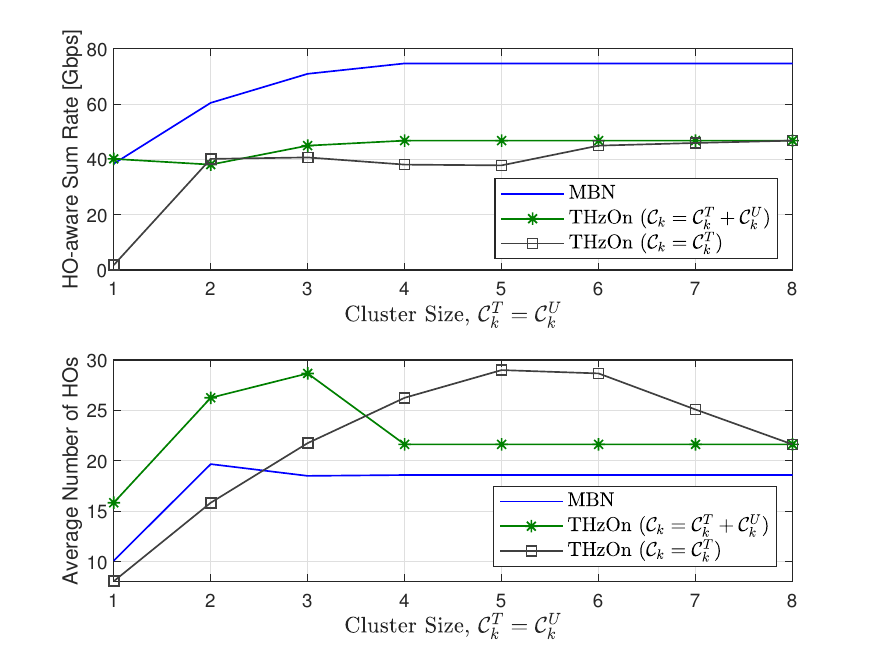}
    \caption{\textcolor{black}{HO-aware sum-rate and the total number of HOs versus different cluster size $(\mathcal{C}_k^\T=\mathcal{C}_k^\U)$, where $R^{\mathrm{Th}} = 0.5$ [Gbps], $K=12$, $B^\T = 5$, $B^\U = 3$, $B^\T_{\mathrm{THzOn}} = 8$, and $\eta^\T=\eta^\U = 0.4$.}}
    \label{fig:ClusterSize-Varies}
\end{figure}

\subsubsection{Impact of HO Cost}

In Fig.~\ref{fig:Sumrate-EtaHO-Varies}, the impact of increasing the HO cost for both frequency bands on the proposed methods is illustrated. First, note that the proposed MBN without HO consideration in the optimization, referred to as MBN-Algo1(No Cost), outperforms its counterpart with only THz BSs deployed, i.e., THzOn-Algo1(No Cost). The THzOn scenario is observed to be more sensitive to increasing HO costs. Thus, even without accounting for HOs during resource allocation, the proposed MBN network outperforms its single-band counterpart.
While the performance of MBN-Algo1-MO does not drop as the HO cost increases, the main challenge with this method is determining an optimal value for $\Xi$, which serves as a hyperparameter and varies across scenarios
Moreover, while MBN-B1(No Cost) is more robust to HOs, it lacks the flexibility to optimize user association and mitigate the severe effects of HOs. The proposed HO-aware methods also improve the performance of the THzOn case, although it underperforms its MBN counterpart as the HO cost increases.

\begin{figure}
    \centering
\includegraphics[scale=0.5]{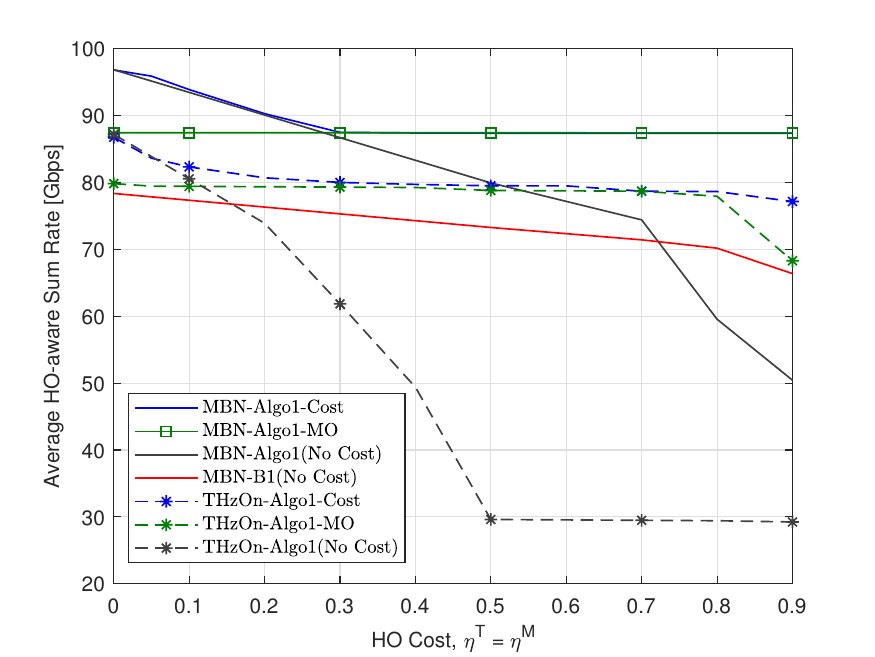}
    \caption{\textcolor{black}{HO-aware sum-rate versus different HO cost $(\eta^\T=\eta^\U)$, where $R^{\mathrm{Th}} = 0.5$ [Gbps], $K=15$, $B^\T = 5$, $B^\U = 3$, $B^\T_{\mathrm{THzOn}} = 8$, and $\Xi=1$.}}
    \label{fig:Sumrate-EtaHO-Varies}
      \vspace{-2mm}
\end{figure}

Fig.~\ref{fig:NumHOs-EtaHO-Varies} shows the impact of increasing the HO cost on the number of HOs. As seen, increasing the HO cost significantly reduces the number of HOs in MBN-Algo1-Cost for both MBN and THzOn scenarios. Also, the MOOP-based method maintains a nearly constant number of HOs, which is due to the absence of a direct relationship between HO and rate in its optimization formulation. MBN-B1(No Cost) results in fewer HOs compared to MBN-Algo1(No Cost), although it lacks the capability to achieve optimal resource allocation. Moreover, we note that even without HO consideration in the resource allocation, MBN imposes a smaller number of HOs in mobile scenarios.

\begin{figure}
    \centering
\includegraphics[scale=0.5]{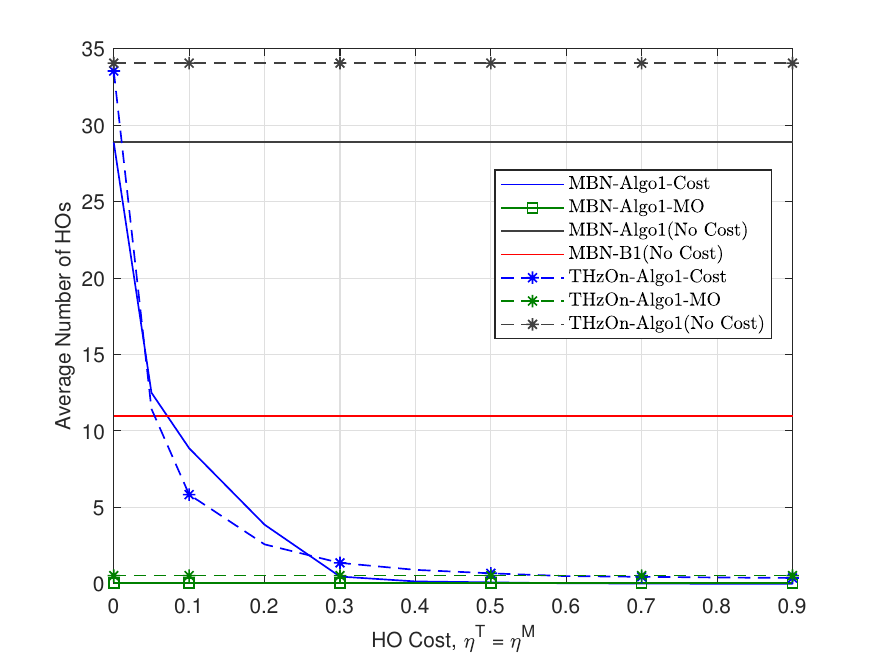}
    \caption{\textcolor{black}{Total number of HOs versus different HO cost $(\eta^\T=\eta^\U)$, where $R^{\mathrm{Th}} = 0.5$ [Gbps], $K=15$, $B^\T = 5$, $B^\U = 3$, $B^\T_{\mathrm{THzOn}} = 8$, and $\Xi=1$.}}
    \label{fig:NumHOs-EtaHO-Varies}
\end{figure}

\subsubsection{Impact of Minimum Rate Requirement}

Fig.~\ref{fig:SumRate_Rth_Varies} illustrates the impact of increasing the minimum rate requirement on the HO-aware system sum-rate. The proposed HO-aware resource allocation consistently outperforms the other methods. Additionally, due to a higher number of HOs in the THzOn system compared to MBN, THzOn demonstrates inferior performance. It is also observed that increasing the minimum rate requirement has a more adverse effect on THzOn-Algo1-MO than on THzOn-Algo1-Cost, highlighting the importance of incorporating HO-aware QoS constraints. 

\begin{figure}
    \centering
\includegraphics[scale=0.5]{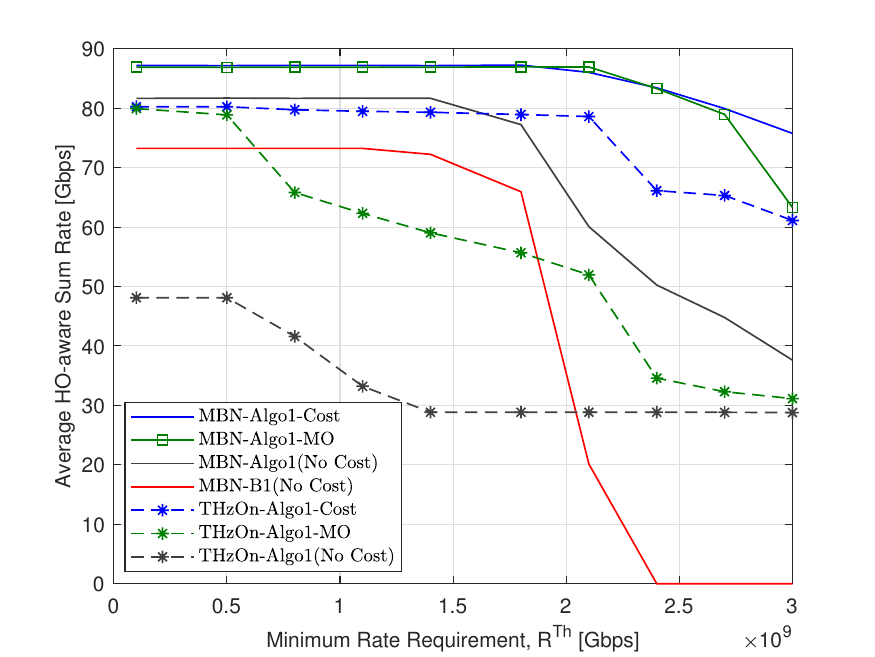}
    \caption{\textcolor{black}{HO-aware  sum-rate versus minimum rate requirement $R^{\mathrm{Th}}$, where $K=15$, $B^\T = 5$, $B^\U = 3$, $B^\T_{\mathrm{THzOn}} = 8$, $\Xi=1$, and $\eta^\T=\eta^\U = 0.4$}}
    \label{fig:SumRate_Rth_Varies}
    \vspace{-2mm}
\end{figure}

\subsubsection{\textcolor{black}{Impact of Imperfect CSI}}
\begin{figure}
    \centering
\includegraphics[scale=0.5]{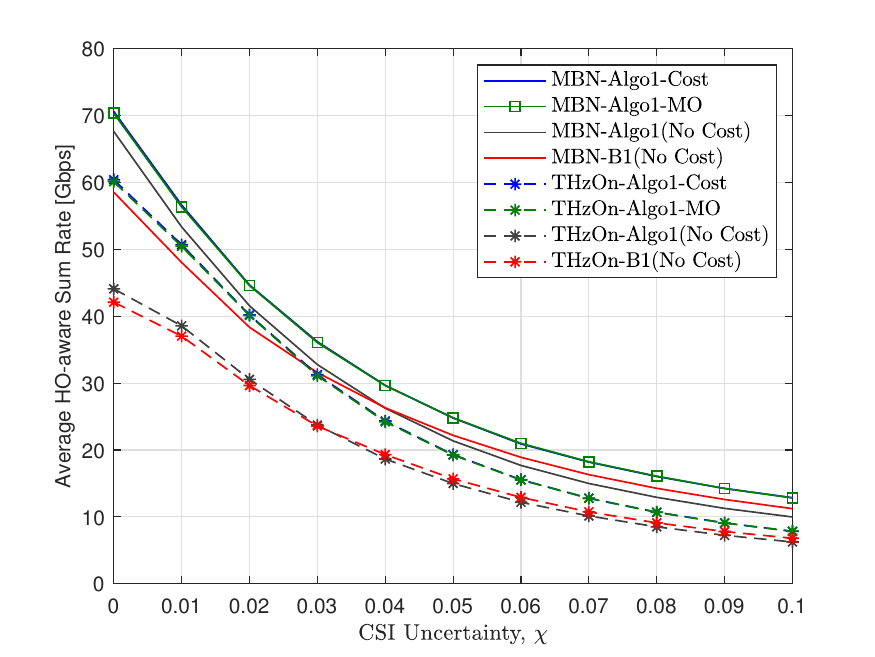}
    \caption{\textcolor{black}{HO-aware system sum-rate versus CSI uncertainty $\chi$, where $K=12$, $B^\T = 4$, $B^\U = 2$, $B^\T_{\mathrm{THzOn}} = 6$, $\Xi=1$, and $\eta^\T=\eta^\U = 0.4$}}
    \label{fig:ImpCSI}
    \vspace{-2mm}
\end{figure}
\textcolor{black}{
Since obtaining perfect CSI is challenging in practice, we examine the robustness of the proposed methods to CSI uncertainties by considering a statistical CSI error model, as in \cite{ImpCSI-1}. For instance, for THz links, the channel is expressed as:
\begin{equation}\label{eqn:ImpCSI}
\h_{b,k}^{\text{Ideal}} = \h_{b,k} + \Delta \h_{b,k}, \quad \forall b,k,
\end{equation}
where $\h_{b,k}^{\text{Ideal}}$ denotes the perfect CSI, $\h_{b,k}$ is the estimated CSI, and $\Delta \h_{b,k}$ represents the CSI estimation error. The error term $\Delta \h_{b,k}$ is modeled as a circularly symmetric complex Gaussian vector.
Following \cite{ImpCSI-1,ImpCSI-2}, the error variance is defined as $\hat{\chi} = \chi {||\h_{b,k}||}^2_2$, where $\chi \in [0,1)$ quantifies the level of CSI uncertainty. Assuming the same value of $\chi$ for all channels, Figure~\ref{fig:ImpCSI} illustrates the impact of imperfect CSI on the proposed methods. As the CSI uncertainty increases, the system performance of all methods degrades. This degradation is attributed to the large number of antennas in ELAAs, which amplifies the CSI error variance as described in \eqref{eqn:ImpCSI}.
It can be observed that for both MBN and THz-On, the performance of MBN-B1 and THzOn-B1 approaches and outperforms that of the proposed HO-unaware methods as the CSI uncertainty increases. This is because, when the estimation error is proportional to the channel gain, resource allocation strategies based on associating with the nearest BSs tend to be less sensitive to CSI errors. Nevertheless, the proposed HO-aware methods still outperform these benchmarks due to their explicit consideration of HOs in the objective function.
}

\section{Conclusion}
This paper presents a comprehensive sum-rate maximization framework for cooperative UMB/THz MBNs through the joint optimization of hybrid beamforming and user association, subject to constraints such as power budget, QoS, and maximum cluster size, for both stationary and moving users. The proposed framework incorporates NFC channel modeling, fully and partially connected hybrid beamforming architectures, and users' mobility.  
\textcolor{black}{Specifically, the coexistence of UMB/THz in the considered MBN improves the system performance under higher molecular absorption as well as THz link blockage. While near-field channel modeling requires more parameters to be known in practice, performing NFC-based beamforming, as opposed to mismatched FFC-based beamforming, becomes essential when using ELAAs. Moreover, we observe that increasing the cluster size in cooperative systems is not always beneficial in mobility scenarios due to HO costs. Additionally, the proposed HO-aware resource allocation methods effectively mitigate the impact of HOs, particularly when the cost of each HO and the minimum rate requirement of each user are high.  Moreover, investigating the integrated deployment of MBN, as proposed in \cite{MBN-Survey}, within the considered framework is a potential research topic. Exploring the feasibility of ML-based methods for solving the optimization problem, such as the approach in \cite{alizadeh2023power}, is another direction for future research.}


\section*{Appendix}

\subsection{Proof of \textbf{Lemma 1}}
\label{Appendix-A}
Given the optimal and feasible $\boldsymbol{X}^o$ in \eqref{Lem1-P1}, define $y_{i,j}^o=\frac{\sqrt{A_{i,j}(\boldsymbol{X}^o)}}{B_{i,j}(\boldsymbol{X}^o)}$. Then, it can be shown that the term $i$ and $j$ in the objective function in \eqref{Lem1-P2} at $y_{i,j}^o$ is: $T_{i,j}(\boldsymbol{X}^o,y_{i,j}^o) = \frac{A_{i,j}(\boldsymbol{X}^o)}{B_{i,j}(\boldsymbol{X}^o)}$. Hence, by taking the logarithm, we have:
{\small
\begin{align}
    & \log\hspace*{-0.1mm}\left(\hspace*{-0.1mm}2\Re\left\{\hspace*{-0.1mm}y_{i,j}^{o,*}\sqrt{A_{i,j}(\boldsymbol{X}^o)}\hspace*{-0.1mm}\right\}\hspace*{-0.5mm}-\hspace*{-0.5mm}\abss{y_{i,j}^o}\hspace*{-0.5mm}B_{i,j}(\boldsymbol{X}^o)\hspace*{-0.1mm}\right)\hspace*{-0.1mm} =\hspace*{-0.1mm} \log\hspace*{-0.1mm}\left(\hspace*{-0.1mm}\frac{A_{i,j}(\boldsymbol{X}^o)}{B_{i,j}(\boldsymbol{X}^o)}\hspace*{-0.1mm}\right)\notag
\end{align}
}
and by summing over all $i$ and $j$, the objective function in \eqref{Lem1-P2} at $(\boldsymbol{X}^o,\boldsymbol{y}^o)$ matches the objective of the problem in \eqref{Lem1-P1}.

Also, since $\boldsymbol{X}^o$ is feasible in \eqref{Lem1-P1}, we have: $\sum\limits_{j} \log\left(\frac{A_{i,j}(\boldsymbol{X}^o)}{B_{i,j}(\boldsymbol{X}^o)}\right) \geq R^{Th}_i$, and by using the result of the previous step and evaluating the terms in $(\boldsymbol{X}^o,\boldsymbol{y}^o)$, we have:
{\small
\begin{align}
    & \sum\limits_{j} \log\hspace*{-0.1mm}\left(\hspace*{-0.1mm}2\Re\left\{\hspace*{-0.5mm}y_{i,j}^{o,*}\sqrt{A_{i,j}(\boldsymbol{X}^o)}\hspace*{-0.5mm}\right\}\hspace*{-0.1mm}-\hspace*{-0.1mm}\abss{y^o_{i,j}}B_{i,j}(\boldsymbol{X}^o)\right) \geq R^{Th}_i, \forall i.\notag
\end{align}
}
Therefore, the objective and feasible set are equivalent for $\boldsymbol{y}^o$. 
Now for optimality in the second problem, suppose there exists a feasible solution $(\boldsymbol{X}^{\prime},\boldsymbol{y}^{\prime})$ to \eqref{Lem1-P2} such that $\sum_{i} \sum_{j} \log \left(2 \Re\left\{y_{i, j}^{\prime *} \sqrt{A_{i, j}\left(\boldsymbol{X}^{\prime}\right)}\right\}-\left|y_{i, j}^{\prime}\right|^2 B_{i, j}\left(\boldsymbol{X}^{\prime}\right)\right)  >\sum_{i} \sum_{j} \log \left(\frac{A_{i, j}\left(\boldsymbol{X}^o\right)}{B_{i, j}\left(\boldsymbol{X}^o\right)}\right).$
Since the maximum of the term inside the logarithm occurs at $y_{i,j}= \frac{\sqrt{A_{i,j}(\boldsymbol{X})}}{B_{i,j}(\boldsymbol{X})}$, we have:
{\small
\begin{align}\label{Lem1-Eq1}
    2 \Re\left\{y_{i, j}^{\prime *} \sqrt{A_{i, j}\left(\boldsymbol{X}^{\prime}\right)}\right\}-\left|y_{i, j}^{\prime}\right|^2 B_{i, j}\left(\boldsymbol{X}^{\prime}\right) \leq \frac{A_{i, j}\left(\boldsymbol{X}^{\prime}\right)}{B_{i, j}\left(\boldsymbol{X}^{\prime}\right)}.
\end{align}
}
Moreover, as the logarithm is a monotonically increasing function, by taking the logarithm and summing over $j$ and $i$ on \eqref{Lem1-Eq1}, we have:
{\small
\begin{align}
    & \sum_{i} \sum_{j} \log \left(2 \Re\left\{y_{i, j}^{\prime *} \sqrt{A_{i, j}\left(\boldsymbol{X}^{\prime}\right)}\right\}-\left|y_{i, j}^{\prime}\right|^2 B_{i, j}\left(\boldsymbol{X}^{\prime}\right)\right) \notag \\ & \leq \sum_{i} \sum_{j} \log \left(\frac{A_{i, j}\left(\boldsymbol{X}^{\prime}\right)}{B_{i, j}\left(\boldsymbol{X}^{\prime}\right)}\right),\notag
\end{align}
}
which implies:
{\small
\begin{align}
    \sum_{i} \sum_{j} \log \left(\frac{A_{i, j}\left(\boldsymbol{X}^{\prime}\right)}{B_{i, j}\left(\boldsymbol{X}^{\prime}\right)}\right)>\sum_{i} \sum_{j} \log \left(\frac{A_{i, j}\left(\boldsymbol{X}^o\right)}{B_{i, j}\left(\boldsymbol{X}^o\right)}\right),\notag
\end{align}
}
which contradicts the optimality of $\boldsymbol{X}^o$ in the first problem in \eqref{Lem1-P1}. Therefore, $\boldsymbol{X}^o$ is the optimal solution to the first problem while $(\boldsymbol{X}^o,\boldsymbol{y}^o)$ is optimal for the second problem.

    
\subsection{Proof of \textbf{Lemma 2}}
\label{Appendix-B}
Each HO-aware rate expression after applying the quadratic transformation can be given in the format of $f(x,y) = (1-y)\log(1+x-x^2)$. To show that $f(x,y)$ is a log-concave function, we need to show that the Hessian of $g(x,y)=\log(f(x,y))$ is negative-definite and $f(x,y) > 0$. The Hessian of $g(x,y)$ is given by:
\begin{align}
    \boldsymbol{H} = \nabla^2 g  = \begin{bmatrix}
\frac{(-2x^2 + 2x - 3) \log(-x^2 + x + 1) - (1 - 2x)^2}{(-x^2 + x + 1)^2 \log^2(-x^2 + x + 1)}
 & 0 \\
0 & -\frac{1}{(1 - y)^2}
\end{bmatrix} \notag
\end{align}
We need to show that the determinant of $\boldsymbol{H}$ is non-negative, which implies that the leading element of $\boldsymbol{H}$ must be non-positive. It can be concluded that the leading element $\boldsymbol{H}_{11}(x) \leq 0, \ \ \forall x\in(0,1)$. This region also complies with the condition of $x-x^2>0$, which is necessary to avoid negative SINR. Therefore, both of the eigenvalues of $\boldsymbol{H}$ are non-positive and the Hessian of $g(x,y)$ is negative semidefinite. On the other hand, with the condition of $f(x,y) > 0$ and the fact that HO cost cannot be negative, we have $ 0 \leq y < 1$ and $x-x^2 > 1$, where the latter is already met. Therefore, we can conclude that for $y \in [0,1)$ and $x\in(0,1)$, the function $f(x,y)$ is log-concave. Also, the condition of $y \in [0,1)$ results in $0 \leq \sum\limits_{b=1}^{B^\T}\left(1-\alpha_{b,k}^{n-1}\right)\alpha_{b,k}^{n} < \frac{1}{\eta^\T}$ for THz and $ 0\leq\sum\limits_{r=1}^{B^\U}\left(1-\beta_{r,k}^{n-1}\right)\beta_{r,k}^{n}<\frac{1}{\eta^\U}$ for UMB.

\bibliography{ref}
\bibliographystyle{IEEEtran}
\begin{IEEEbiography}[{\includegraphics[width=1in,height=1.25in, clip,keepaspectratio]{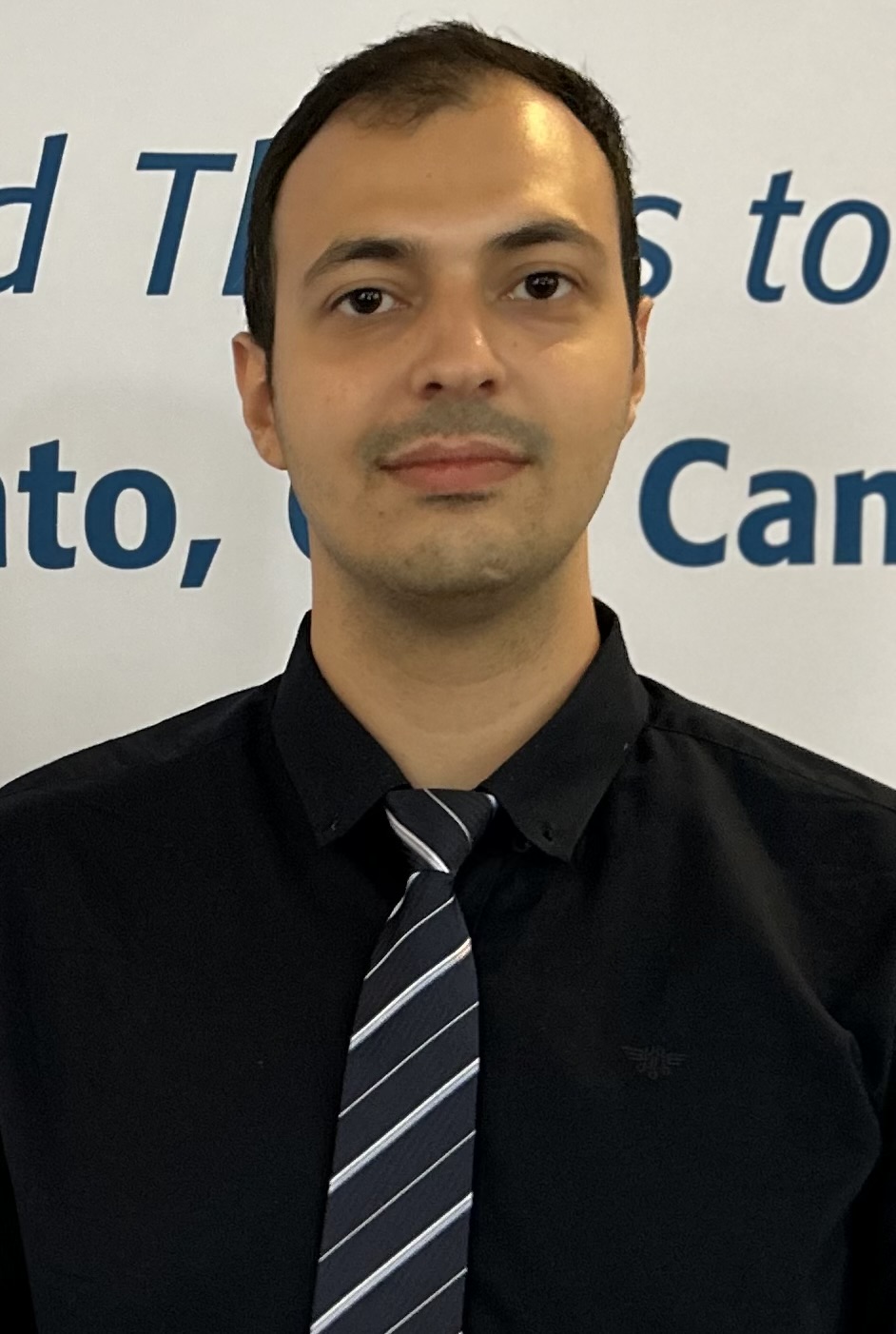}}]{\textbf{Mohammad Amin Saeidi}} (Graduate Student Member, IEEE) received the M.Sc. degree in Electrical Engineering – Communication Systems from Amirkabir University of Technology, Tehran, Iran, in 2021. He is currently pursuing a Ph.D. degree in Electrical Engineering and Computer Science at York University, Canada. His research focuses on resource and mobility management, as well as optimization in 6G wireless communications, terahertz communication, multi-band networks, and reconfigurable intelligent surfaces. He has served as a reviewer for various IEEE journals, including IEEE Transactions on Wireless Communications, IEEE Transactions on Communications, IEEE Transactions on Mobile Computing, IEEE Open Journal of the Communications Society, IEEE Wireless Communications Letters, IEEE Communications Letters, and IEEE Transactions on Green Communications and Networking.
\end{IEEEbiography}

\begin{IEEEbiography}[{\includegraphics[width=1in,height=1in,clip,keepaspectratio]{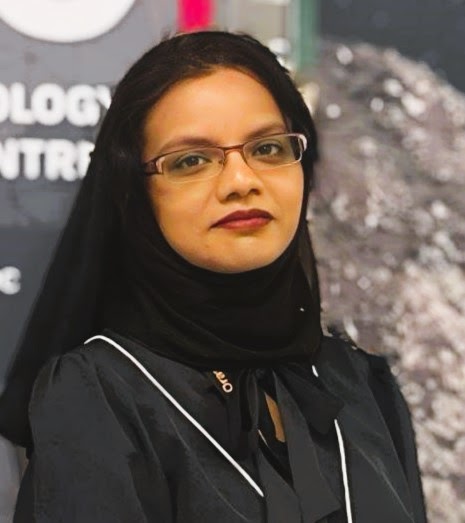}}]{Hina Tabassum} \,
(Senior Member, IEEE) \,  \, (M'12-SM'18)
 received the Ph.D. degree from the King Abdullah University of Science and Technology, Thuwal, Saudi Arabia. She is currently an Associate Professor with the Lassonde School of Engineering, York University, Toronto, where she joined as an Assistant Professor in 2018. She was appointed as a Visiting Faculty with University of Toronto, Toronto, ON, Canada, in 2024, and the York Research Chair of 5G/6G-enabled mobility and sensing applications in 2023, for five years. She has coauthored more than 120 refereed articles in well-reputed IEEE journals, magazines, and conferences. Her current research interests include multi-band 6G wireless communications and sensing networks, connected and autonomous systems, AI-enabled network mobility, and resource management solutions. She has been selected as the IEEE ComSoc Distinguished Lecturer for the term 2025–2026. She is listed in Stanford’s list of the World’s Top Two-Percent Researchers in 2021–2024. She was the recipient of the Lassonde Innovation Early-Career Researcher Award in 2023 and the N2Women: Rising Stars in Computer Networking and Communications in 2022. She was the Founding Chair of the Special Interest Group on THz communications in the IEEE Communications Society–Radio Communications Committee. Currently, she is serving as an Area Editor of the IEEE Open Journal of the Communications Society and IEEE Communications Surveys and Tutorials as well as an Associate Editor for IEEE Transactions on Communications, IEEE Transactions on Mobile Computing, and IEEE Transactions on Wireless Communications.
\end{IEEEbiography}
\end{document}